\documentclass[11pt]{article}
\usepackage[T1]{fontenc}
\usepackage[utf8]{inputenc}
\usepackage[a4paper,margin=2.5cm]{geometry}
\usepackage{amsmath,amssymb,amsthm,mathtools}
\usepackage[expansion=false]{microtype}
\usepackage{algorithm, algpseudocode, booktabs, tabularx, enumitem, xcolor, hyperref}
\hypersetup{colorlinks=true,linkcolor=blue!45!black,citecolor=blue!45!black,urlcolor=blue!45!black}

\theoremstyle{plain}
\newtheorem{theorem}{Theorem}
\newtheorem{lemma}[theorem]{Lemma}
\newtheorem{proposition}[theorem]{Proposition}
\newtheorem{corollary}[theorem]{Corollary}
\theoremstyle{definition}
\newtheorem{definition}{Definition}
\theoremstyle{remark}
\newtheorem{remark}{Remark}
\newcommand{\Rring}{\mathcal{R}}
\newcommand{\Rq}{\mathcal{R}_q}
\newcommand{\Adv}{\mathsf{Adv}}
\newcommand{\negl}{\mathsf{negl}}
\newcommand{\Btau}{B_\tau}
\newcommand{\us}{\mathrel{\overset{\$}{\leftarrow}}}
\newcommand{\inner}[1]{\langle #1 \rangle}
\newcommand{\Exp}{\mathsf{Exp}}
\newcommand{\Oracle}[1]{\mathcal{O}\mathsf{#1}}
\DeclareMathOperator{\HighBits}{\mathsf{HighBits}}
\DeclareMathOperator{\LowBits}{\mathsf{LowBits}}
\DeclareMathOperator{\Encode}{\mathsf{Encode}}
\DeclareMathOperator{\Setup}{\mathsf{Setup}}
\DeclareMathOperator{\KeyGen}{\mathsf{KeyGen}}
\DeclareMathOperator{\Sign}{\mathsf{Sign}}
\DeclareMathOperator{\Verify}{\mathsf{Verify}}
\DeclareMathOperator{\CSetup}{\mathsf{CSetup}}
\DeclareMathOperator{\Com}{\mathsf{Com}}
\DeclareMathOperator{\Shuffle}{\mathsf{Shuffle}}
\DeclareMathOperator{\WSign}{\mathsf{WSign}}
\DeclareMathOperator{\WSVerify}{\mathsf{WSVerify}}
\DeclareMathOperator{\Confirm}{\mathsf{Confirm}}
\DeclareMathOperator{\CVerify}{\mathsf{CVerify}}
\DeclareMathOperator{\Chk}{\mathsf{Check}}
\DeclareMathOperator{\AKeyGen}{\mathsf{AKeyGen}}
\DeclareMathOperator{\ASign}{\mathsf{ASign}}
\DeclareMathOperator{\AVerify}{\mathsf{AVerify}}
\DeclareMathOperator{\EWS}{\mathsf{EWS}}
\DeclareMathOperator{\DS}{\mathsf{DS}}
\DeclareMathOperator{\Ambi}{\mathsf{Ambi}}

\newcommand{\AdvEUF}[1]{\Adv^{\mathsf{EUF\text{-}CMA}}_{#1}}
\newcommand{\AdvMLWE}{\Adv^{\mathsf{MLWE}}}
\newcommand{\AdvMSIS}{\Adv^{\mathsf{MSIS}}}
\newcommand{\AdvSTMSIS}{\Adv^{\mathsf{STMSIS}}}
\newcommand{\Advhide}{\Adv^{\mathsf{hide}}}
\newcommand{\Advbind}{\Adv^{\mathsf{bind}}}
\newcommand{\Advanon}{\Adv^{\mathsf{anon}}}
\newcommand{\Advunf}{\Adv^{\mathsf{unf}}}
\newcommand{\Advwd}{\Adv^{\mathsf{wd}}}
\newcommand{\Advclaim}{\Adv^{\mathsf{claim}}}
\numberwithin{equation}{section}

\title{\bfseries Lattice-based extended withdrawability}
\author{Ramses Fernandez\\ Fairgate Labs\\ \texttt{ramses.fernandez@fairgate.io}}
\date{}
\begin{document}
\maketitle
\begin{abstract}
We extend the extended withdrawable signatures of Liu, Susilo and Baek to lattice-based constructions built on the Fiat-Shamir with aborts paradigm. Departing from an earlier draft that transported a per-signer shift in the clear, which leaks the signer, we realise extended withdrawable signatures as a claimable ring signature: signer ambiguity is provided by a one-out-of-$N$ signature used as a black box (anonymity under full key exposure), and confirmation is the signer's claim, a binding signature together with the opening of a hiding index commitment bound into the transcript. No signer-derived value is published in the clear. We give complete proofs of correctness, extended withdrawability (as anonymity-until-claim), unforgeability under insider corruption, and claimability soundness, reducing to decisional MLWE (commitment hiding), MSIS (commitment binding), the anonymity of the one-out-of-$N$ scheme, and the EUF-CMA security of the base signature, in the (quantum) random-oracle model. We instantiate the base signature with a no-hint, full-$t$ Dilithium-style scheme and the one-out-of-$N$ layer with an established lattice one-out-of-many proof.
\end{abstract}

\section{Introduction}\label{sec:intro}
Digital signatures bind a signer's identity to a message and provide authenticity, integrity and non-repudiation, underpinning secure communication and, increasingly, the settlement of transactions in decentralised systems such as blockchains. A defining feature of a conventional signature is its permanence. Once a signer legitimately produces a signature on a message, that signature remains verifiable under the signer's public key indefinitely, and the signer cannot rescind it. Permanence is usually a feature, but it is a liability in the many settings where a commitment may legitimately need to be revoked: limited-time access grants, revocable electronic agreements, service-level commitments in outsourcing systems, decentralised escrow, and privacy-preserving voting. In all of these one would like a signer to be able to withdraw a signature without disclosing its secret key and without affecting the validity of its other signatures.

Liu, Baek and Susilo introduced withdrawable signatures \cite{liu2023} to capture exactly this capability. A withdrawable signature is revoked by default: after issuing it the signer may simply take no further action, in which case the signature is not verifiable under the signer's public key. If the signer later wishes to stand behind the message, it, and only it, can confirm the withdrawable signature, turning it into a publicly verifiable one that remains deterministically traceable to the original object, so that the confirmed signature cannot be detached from the withdrawable one it came from. The first construction realised withdrawability through a designated-verifier signature (DVS) \cite{jakobsson1996,chaum1991}: the withdrawable signature is verifiable only by a single, signer-chosen verifier, and the ambiguity between the signer and that designated verifier is precisely what allows the signature to be withdrawn. The price is a significant limitation, verification is restricted to one designated party, so the scheme lacks universal verifiability.

\paragraph{From designated verifiers to extended withdrawability.}
To lift this restriction, Liu, Susilo and Baek proposed extended withdrawable signatures \cite{liu2024}. Their starting observation is that a DVS is a two-party object whose ambiguity is exactly that of a $1$-out-of-$2$ signature, and that the notion generalises to $N\ge 2$ potential signers: a withdrawable signature can be made verifiable under any public key in a set of potential signers $\pi$ while keeping the true signer ambiguous among them, until confirmation. This ``extended withdrawability'' can be obtained from any $1$-out-of-$N$ signature that provides signer ambiguity, ring signatures \cite{rivest2001} and designated-verifier signatures being two natural instances, and yields a publicly verifiable withdrawable signature: any party holding the signature and the public-key set can check it, yet no party can tell which member of $\pi$ signed. Liu, Susilo and Baek give a generic construction of extended withdrawable signatures from discrete-logarithm primitives, with an instantiation based on the Schnorr signature \cite{schnorr1991}, and prove it correct, unforgeable under insider corruption, and extended-withdrawable.

\paragraph{The post-quantum gap.}
The discrete-logarithm foundation of \cite{liu2024}, and of the Schnorr and BLS \cite{boneh2001} building blocks it relies on, is broken by Shor's quantum algorithm \cite{shor1999}. Because withdrawable signatures are meant to protect long-lived commitments in decentralised systems, a construction that fails against a quantum adversary is of limited durable value, and a post-quantum realisation is called for. The natural route is Lyubashevsky's Fiat-Shamir with aborts paradigm \cite{lyubashevsky2009}, which turns lattice-based identification into signatures through a controlled rejection-sampling (``abort'') step and underlies the standardised signature Dilithium (ML-DSA) \cite{ducas2017} as well as HAETAE \cite{cheon2024}. A first step in this direction handled the designated-verifier (two-party) withdrawable signature over the Fiat-Shamir with aborts paradigm \cite{fernandez2026}; the present work addresses the extended, publicly verifiable, $N$-signer notion.

\paragraph{Why a literal lattice transcription fails.}
Porting \cite{liu2024} to lattices is not just a matter of replacing Schnorr by a lattice signature. The discrete-logarithm construction re-randomises each ring member's public key by a multiplicative shift and publishes those shifts, so that the confirmed signature can later be linked back to the withdrawable one. Over a prime-order group this is safe, because the published shifts $\{g^{s_j}\}$ are independent of the signer's secret key and uniformly distributed, and therefore leak nothing about who signed. A literal lattice transcription, however, replaces each shift $g^{s_j}$ by a term of the form $A\mathbf s_1 s_j$ that carries the signer's own secret $\mathbf s_1$ as a common factor. Since the signer's public key satisfies $t_i=A\mathbf s_1+\mathbf s_2$ with $\mathbf s_2$ short, the value $A\mathbf s_1$ is within short distance of $t_i$ but of no other public key; hence publishing $\{A\mathbf s_1 s_j\}$ exposes $A\mathbf s_1\approx t_i$ and breaks the very signer ambiguity the scheme exists to provide. This is the same structural obstruction, the lack of a free re-randomisation of a shared value in the lattice setting, encountered for the designated-verifier case \cite{fernandez2026}, and it means the transformation must be redesigned rather than transcribed.

\paragraph{Our approach.}
We keep the two objectives of an extended withdrawable signature separate and realise each by a primitive with an established security analysis, so that no signer-derived value is ever published in the clear. Ambiguity is provided by an anonymous $1$-out-of-$N$ (ring) signature, used as a black box and instantiated by an established lattice one-out-of-many proof \cite{groth2015,esgin2019}. Confirmation is the signer's claim: a binding signature on the withdrawable object, together with the opening of a hiding commitment to the signer's index that is bound into the transcript at signing time. Before confirmation the commitment hides the signer and the ring signature is anonymous; at confirmation the signer opens the commitment, and the binding signature ties the claim to the exact withdrawable object. This is precisely a claimable ring signature in the sense of Park and Sealfon \cite{park2019}, a primitive that Liu, Susilo and Baek themselves single out as an instantiation of withdrawable signatures \cite{liu2024,yamashita2023}. The resulting scheme is publicly verifiable, ambiguous until claimed, and provably secure from decisional MLWE, MSIS, and the unforgeability of the base signature.

\subsection{Related work}\label{sec:related}
\paragraph{Withdrawable signatures.}
Withdrawable signatures were introduced in \cite{liu2023} with a designated-verifier construction, and extended to the publicly verifiable, $N$-signer setting in \cite{liu2024}. A lattice-based designated-verifier variant over the Fiat-Shamir with aborts paradigm was given in \cite{fernandez2026}; the present paper is its extended, publicly verifiable counterpart.

\paragraph{Designated-verifier and universal designated-verifier signatures.}
Designated-verifier signatures were introduced independently by Jakobsson, Sako and Impagliazzo \cite{jakobsson1996} and by Chaum \cite{chaum1991}, and studied extensively thereafter \cite{li2007,thorn2020}. Universal designated-verifier signatures (UDVS), in which any holder of a signature can designate a verifier, were initiated by Steinfeld et al. \cite{steinfeld2003} and extended to Schnorr and RSA signatures in \cite{steinfeld2004}; a lattice-based UDVS appears in \cite{li2018}.

\paragraph{Ring signatures and one-out-of-many proofs.}
Ring signatures, introduced by Rivest, Shamir and Tauman \cite{rivest2001}, let a signer produce a signature on behalf of a self-chosen set of public keys while hiding which member signed; constructions exist from RSA, discrete-logarithm \cite{herranz2003}, pairing \cite{boneh2003} and lattice \cite{esgin2019} assumptions. A generic route is the one-out-of-many zero-knowledge proof of Groth and Kohlweiss \cite{groth2015}, with a short lattice instantiation by Esgin et al. \cite{esgin2019}. Our $1$-out-of-$N$ building block is any such scheme with anonymity under full key exposure.

\paragraph{Claimable ring and designated-verifier signatures.}
Park and Sealfon \cite{park2019} introduced claimable ring signatures, which let an anonymous signer later prove authorship, and Yamashita et al. \cite{yamashita2023} the analogous claimable designated-verifier signatures. As Liu, Susilo and Baek observe \cite{liu2024}, both are instantiations of withdrawable signatures: the ``claim'' is exactly the confirmation step. Our construction makes this correspondence its design principle.

\paragraph{Lattice signatures from Fiat-Shamir with aborts.}
The Fiat-Shamir heuristic \cite{fiat2000} and its lattice adaptation with aborts \cite{lyubashevsky2009} underlie Dilithium \cite{ducas2017} and HAETAE \cite{cheon2024}, whose security rests on module lattice problems and, in the quantum random-oracle model, on the analyses of Kiltz, Lyubashevsky and Schaffner \cite{kiltz2018} and the adaptive-reprogramming lemma of Grilo, H\"ovelmanns, H\"ulsing and Majenz \cite{grilo2021}. Our commitment layer uses an Ajtai-style commitment \cite{ajtai1996}.

\subsection{Contributions}\label{sec:contributions}
Concretely, this paper makes the following contributions.

\begin{enumerate}
    \item Security model: We adopt the syntax and the three security goals, correctness, unforgeability under insider corruption, and extended withdrawability, of Liu, Susilo and Baek \cite{liu2024}, and we sharpen their formalisation of extended withdrawability. Their definition asserts only that a withdrawable signature can be re-attributed to another member and still verify (a completeness statement), delegating the hiding guarantee to the signer ambiguity of the underlying $1$-out-of-$N$ scheme. We fold both into a single anonymity-until-claim indistinguishability game (Definition \ref{def:wd}), state unforgeability under insider corruption as an explicit experiment (Definition \ref{def:eufcmaEWS}), and add a claimability soundness notion (Definition \ref{def:claim}) that makes precise the requirement that only the true author can confirm.
    \item Generic construction: We give a generic construction of extended withdrawable signatures over the Fiat-Shamir with aborts paradigm (Section \ref{sec:construction}) from three modular ingredients: any base signature, any anonymous $1$-out-of-$N$ signature, and any hiding-and-binding commitment. Crucially, and in contrast with a literal transcription of the discrete-logarithm scheme, the construction publishes no signer-derived value: ambiguity comes from the ring signature and traceability from a committed index, avoiding the signer-identifying leak described above.
    \item Complete security proofs: We prove, with fully explicit game-based reductions in the (quantum) random-oracle model (Section \ref{sec:security}): correctness; extended withdrawability, reducing to decisional MLWE (commitment hiding) and the anonymity of the $1$-out-of-$N$ scheme; unforgeability under insider corruption, reducing to the EUF-CMA security of the base signature (with no rewinding and no multiplicative random-oracle loss); and claimability soundness, reducing to MSIS (commitment binding).
    \item Instantiation: We instantiate the base signature with a no-hint, full-$t$ Dilithium-style scheme and the ambiguity layer with a lattice one-out-of-many proof (Section \ref{sec:instantiation}), correcting the parameter choices (the low-order bound $\gamma_2$, the challenge set $\Btau$, and the high/low-order decomposition) and explaining why a naive Dilithium challenge-split is not a sound anonymous ring signature. HAETAE is discussed as an alternative base.
    \item Comparison with Liu-Susilo-Baek: We give a precise comparison (Section \ref{sec:comparison}) that traces every divergence from \cite{liu2024} to a single structural fact, the absence of a free re-randomisation of a shared value in the lattice setting, and shows that our claimable-ring-signature design is faithful to their primitive, their security goals, and their own stated intuition.
\end{enumerate}

\section{Preliminaries}\label{sec:prelim}
\subsection{Notation, rings and norms}\label{sec:notation}
We write $\kappa$ for the security parameter. A function $\negl:\mathbb N\to\mathbb R_{\ge 0}$ is negligible if it decays faster than any inverse polynomial; PPT abbreviates probabilistic polynomial-time. For a positive integer $a$ we write $[a]=\{1,\dots,a\}$; in particular $[N]$ indexes the $N$ potential signers of a ring, while $n$ (never bracketed) always denotes the ring degree of $\Rq$. Sets of public keys carry a fixed canonical total order (say, lexicographic on the encoding $\inner{\mathbf t}$); the \emph{index} of a key in a ring $\pi$ always means its position under this order, and when keys are generated in an experiment we relabel them so that $pk_1<\cdots<pk_N$, making global labels coincide with positions in the full pool. We write $a\us S$ for a uniform draw of $a$ from a finite set $S$. All logarithms are base $2$, and $\inner{\cdot}$ denotes a fixed injective encoding of tuples of bit strings and ring elements into $\{0,1\}^\ast$.

Let $n$ be a power of $2$ and $q$ an odd prime (with $q\equiv 1\bmod 2n$ in the Dilithium instantiation, so that $x^n+1$ splits completely). We work in the rings $\Rring=\mathbb Z[x]/(x^n+1)$ and $\Rq=\mathbb Z_q[x]/(x^n+1)$. Elements of $\Rring,\Rq$ are written in regular font; vectors are bold lower-case and are column vectors unless stated otherwise; matrices are bold upper-case; $\mathbf I$ is the identity.

For an even positive integer $\alpha$ we write $r\bmod^{\pm}\alpha$ for the unique representative of $r\bmod\alpha$ in $(-\tfrac{\alpha}{2},\tfrac{\alpha}{2}]$, and for odd $\alpha$ in $[-\tfrac{\alpha-1}{2},\tfrac{\alpha-1}{2}]$ (centred reduction); $r\bmod^{+}\alpha$ is the representative in $[0,\alpha)$. For $w\in\mathbb Z_q$ set $\lVert w\rVert_\infty=\lvert w\bmod^{\pm}q\rvert$. For $w=\sum_{i=0}^{n-1}w_i x^i\in\Rring$ set $\lVert w\rVert_\infty=\max_i\lVert w_i\rVert_\infty$ and $\lVert w\rVert=(\sum_i\lVert w_i\rVert_\infty^2)^{1/2}$, and for $\mathbf w=(w_1,\dots,w_k)\in\Rring^k$, $\lVert\mathbf w\rVert_\infty=\max_i\lVert w_i\rVert_\infty$ and $\lVert\mathbf w\rVert=(\sum_i\lVert w_i\rVert^2)^{1/2}$. For $\eta\in\mathbb N$ let $S_\eta=\{w\in\Rring:\lVert w\rVert_\infty\le\eta\}$; when the exact bound is dictated by the underlying scheme we write simply $S$. The challenge set $\Btau\subset\Rring$ is the set of elements with exactly $\tau$ coefficients in $\{-1,1\}$ and the rest $0$; then $\lvert\Btau\rvert=2^{\tau}\binom{n}{\tau}$ and every $c\in\Btau$ has $\lVert c\rVert_\infty=1$ and $\lVert c\mathbf s\rVert_\infty\le\tau\lVert\mathbf s\rVert_\infty$.

\subsection{The (quantum) random-oracle model}\label{sec:qrom}
A hash function $H$ used inside a Fiat-Shamir transform is modelled as a random oracle: a function
drawn uniformly from all functions of the given domain and range, accessible to all parties only as an
oracle. In the classical random-oracle model (ROM) queries are classical; in the quantum random-oracle
model (QROM) an adversary may query $H$ on superpositions of inputs. Security reductions may program
the oracle, answering a chosen input with a chosen (consistently distributed) value; in the QROM this
is governed by the adaptive-reprogramming lemma of \cite{grilo2021}. Unless noted, all results are stated
in the classical ROM and lifted to the QROM in Remark \ref{rem:qrom}.

\subsection{High-order and low-order bits}\label{sec:bits}
Following Dilithium \cite{ducas2017}, for an even $\alpha\mid(q-1)$ the decomposition of $w\in\mathbb Z_q$ writes $w=\HighBits(w,\alpha)\cdot\alpha+\LowBits(w,\alpha)$ with $\LowBits(w,\alpha)=w\bmod^{\pm}\alpha$ (and a boundary correction ensuring $\HighBits\in\{0,\dots,(q-1)/\alpha\}$); the maps extend coordinatewise to $\Rring$ and to vectors. Intuitively $\HighBits$ keeps the top part of $w$ and $\LowBits$ the signed remainder of magnitude $\le\alpha/2$. We use the following standard stability property, which is the basis of correctness of every Fiat-Shamir with aborts scheme.

\begin{lemma}[Rounding stability \cite{ducas2017}]\label{lem:rounding}
Let $\alpha$ be an even divisor of $q-1$. For all $\mathbf w\in\Rq^k$ and $\mathbf u\in\Rq^k$ with $\lVert\mathbf u\rVert_\infty\le\beta$, if $\lVert\LowBits(\mathbf w,\alpha)\rVert_\infty<\tfrac{\alpha}{2}-\beta$ then $\HighBits(\mathbf w+\mathbf u,\alpha)=\HighBits(\mathbf w,\alpha)$.
\end{lemma}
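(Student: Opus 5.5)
The plan is to reduce to a single coefficient and then compare the two decompositions directly. Both $\HighBits$ and $\LowBits$ act coordinatewise on $\Rq^k$, and the hypotheses are $\infty$-norm bounds, which are maxima over coordinates. So it suffices to prove the scalar statement: for $w,u\in\mathbb Z_q$ with $\lvert u\bmod^{\pm}q\rvert\le\beta$ and $\lvert\LowBits(w,\alpha)\rvert<\tfrac{\alpha}{2}-\beta$, we have $\HighBits(w+u,\alpha)=\HighBits(w,\alpha)$.

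For the scalar case, write $w\bmod^{+}q=w_1\alpha+w_0$ with $w_1=\HighBits(w,\alpha)$ and $w_0=\LowBits(w,\alpha)$. Treat $u$ as its centred integer representative.

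\emph{Generic case} (no boundary correction applied to $w$). Here $w_0=w\bmod^{\pm}\alpha\in(-\tfrac{\alpha}{2},\tfrac{\alpha}{2}]$. The integer $w_0+u$ satisfies
\[
\lvert w_0+u\rvert<\tfrac{\alpha}{2}-\beta+\beta=\tfrac{\alpha}{2}.
\]
Thus $w+u\equiv w_1\alpha+(w_0+u)$ with $w_0+u$ strictly inside the centred window. By uniqueness of the representative $r\bmod^{\pm}\alpha$, we get $\LowBits(w+u,\alpha)=w_0+u$ and $\HighBits(w+u,\alpha)=w_1$. The one subtlety is the reduction modulo $q$: the integer $w_1\alpha+w_0+u$ might leave $[0,q)$. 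Since $\alpha\mid q-1$, reducing it mod $q$ shifts it by $\pm q$. I will check that this only happens near $0$ or near $q-1$, which is exactly where the boundary correction acts.

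\emph{Boundary case}. This is the main obstacle. Dilithium's convention is: if $w\bmod^{+}q-w_0=q-1$, set $w_1=0$ and $w_0\leftarrow w_0-1$. I would verify case by case that the strict inequality on $\lVert\LowBits\rVert_\infty$, applied to this corrected $w_0$, still leaves room $\beta$ on both sides. The key point is that $q-1\equiv-1$, so the top block $\{q-1-\alpha/2,\dots,q-1\}$ and the bottom block $\{0,\dots,\alpha/2\}$ glue into one window of $\mathbb Z_q$ centred at $0$, and both are assigned high bits $0$. A perturbation of size at most $\beta$ that crosses $0$ modulo $q$ therefore stays inside a set with high bits $0$, provided $\lvert w_0\rvert<\alpha/2-\beta$. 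The remaining sub-cases, where $w$ is near the bottom block or $w+u$ lands in the top block, are symmetric. I expect to need the off-by-one from the correction to be absorbed by the strictness of the inequality. If it is not, I would fall back on citing \cite{ducas2017}, where this is proved for their exact decomposition, since the statement is attributed there.

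Finally, I combine the coordinates: every coefficient of every component satisfies the scalar hypotheses, so every coefficient of $\HighBits$ is preserved, which gives the vector statement.
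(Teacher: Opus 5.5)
The paper does not prove this lemma; it only cites \cite{ducas2017}. Your plan (reduce to a single coefficient, write $w=w_1\alpha+w_0$, and use $\lvert w_0+u\rvert<\alpha/2$) is the standard argument from that source, and the approach is correct. What you leave unfinished is the boundary case, and there your description is off by one. The block that Dilithium's correction sends to high bits $0$ is $\{q-\alpha/2,\dots,q-1\}$: there the uncorrected low bits $w-(q-1)\in(-\alpha/2,0]$ become $w-q\in[-\alpha/2,-1]$. The point $q-1-\alpha/2$ is not in this block; its low bits are $\alpha/2$ and its high bits are $(q-1)/\alpha-1$.

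Once this is fixed, there is no off-by-one to absorb. Split on $w_1$ rather than on whether the correction fires, and each case takes one line.

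If $w_1=0$, then $w_0=w\bmod^{\pm}q$, and the set of $x\in\mathbb Z_q$ with $\HighBits(x,\alpha)=0$ is exactly $\{x:\lvert x\bmod^{\pm}q\rvert\le\alpha/2\}$. Since $w_0+u$ is an integer congruent to $w+u$ with $\lvert w_0+u\rvert<\alpha/2\le(q-1)/2$, it equals $(w+u)\bmod^{\pm}q$. Hence $\HighBits(w+u,\alpha)=0$, whether or not $w+u$ wraps past $0$.

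If $1\le w_1\le(q-1)/\alpha-1$, the integer $w_1\alpha+w_0+u$ lies in $(\alpha/2,\,q-1-\alpha/2)$. So it needs no reduction mod $q$, and its centred residue mod $\alpha$ is $w_0+u$. Moreover $(w+u)-(w_0+u)=w_1\alpha\ne q-1$, so the correction does not fire and $\HighBits(w+u,\alpha)=w_1$.

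Your instinct that the correction is the crux is right. The paper's gloss of it, with high bits in $\{0,\dots,(q-1)/\alpha\}$, is loose. Without Dilithium's actual convention (high bits in $\{0,\dots,(q-1)/\alpha-1\}$) the lemma is false. For example, take $\beta=1$, $\alpha\ge 4$, $w=0$, $u=-1$: then $w$ has high bits $0$, while $w+u=q-1$ would have high bits $(q-1)/\alpha$.
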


\subsection{Digital signatures and EUF-CMA}\label{sec:sig}
\begin{definition}[Signature scheme]\label{def:sig}
A digital signature scheme is a triple $\DS=(\KeyGen,\Sign,\Verify)$: $\KeyGen(pp)$ outputs $(pk,sk)$; $\Sign(\mu,sk)$ outputs a signature $\delta$; $\Verify(\mu,pk,\delta)\in\{0,1\}$. It is correct if $\Verify(\mu,pk,\Sign(\mu,sk))=1$ with overwhelming probability for honestly generated keys and every $\mu$.
\end{definition}

\begin{definition}[EUF-CMA]\label{def:eufcma}
In $\Exp^{\mathsf{EUF\text{-}CMA}}_{\DS,\mathcal A}$, the challenger runs $(pk,sk)\gets\KeyGen(pp)$, gives $pk$ to $\mathcal A$, and answers signing queries $\mu$ by $\Sign(\mu,sk)$; let $Q$ be the set of queried messages. $\mathcal A$ wins by outputting $(\mu^\ast,\delta^\ast)$ with $\mu^\ast\notin Q$ and $\Verify(\mu^\ast,pk,\delta^\ast)=1$. We write $\AdvEUF{\DS}(\mathcal A)=\Pr[\mathcal A\text{ wins}]$, and say $\DS$ is EUF-CMA secure if this is negligible for every PPT $\mathcal A$.
\end{definition}

\subsection{Fiat-Shamir with aborts signatures}\label{sec:fswa}
Our base signatures follow Lyubashevsky's Fiat-Shamir with aborts paradigm \cite{lyubashevsky2009}. Keys are $pk=\mathbf t=A\mathbf s_1+\mathbf s_2$ with short $(\mathbf s_1,\mathbf s_2)$ and a public matrix $A$ (Section \ref{sec:pp}); signing draws a masking vector $\mathbf y\us S^l_{\gamma_1-1}$, forms the commitment $\mathbf w=\HighBits(A\mathbf y,2\gamma_2)$, derives a challenge $c=H(\mu,\mathbf w)\in\Btau$, sets the response $\mathbf z=\mathbf y+c\mathbf s_1$, and rejects (restarting) unless $\mathbf z$ and $\LowBits(A\mathbf y-c\mathbf s_2,2\gamma_2)$ lie within scheme-dependent bounds; verifying recomputes $\HighBits(A\mathbf z-c\mathbf t,2\gamma_2)$ and checks it hashes to $c$. The rejection step makes the output distribution independent of the secret key: the accepted transcript admits an efficient, witness-free simulator (non-abort honest-verifier zero-knowledge, naHVZK), a property we invoke for the ambiguity layer. Dilithium \cite{ducas2017} and HAETAE \cite{cheon2024} are concrete instances; we use such a scheme as the black-box base $\DS$ of Definition \ref{def:sig}.

\subsection{Hardness assumptions}\label{sec:hardness}
Security rests on three module-lattice assumptions. Informally, MLWE guards secret keys against key recovery, SelfTargetMSIS is the assumption behind new-message unforgeability of Fiat-Shamir with aborts signatures, and MSIS underlies strong unforgeability (resistance to malleability of existing signatures).

\begin{definition}[Module Learning With Errors, $\mathsf{MLWE}_{m,k,D}$]\label{def:mlwe}
For a distribution $D$ over $\Rq$, the advantage of $\mathcal A$ in distinguishing a uniform pair from a module-LWE pair is
\[
\AdvMLWE_{m,k,D}(\mathcal A)=\bigl\lvert\Pr[\mathcal A(A,\mathbf t)=1]-\Pr[\mathcal A(A,A\mathbf s_1+\mathbf s_2)=1]\bigr\rvert,
\]
over $A\us\Rq^{m\times k}$, $\mathbf t\us\Rq^m$, $\mathbf s_1\us D^k$, $\mathbf s_2\us D^m$. MLWE is hard if this is negligible for every PPT $\mathcal A$.
\end{definition}

\begin{definition}[Module Short Integer Solution, $\mathsf{MSIS}_{m,k,\gamma}$]\label{def:msis}
In Hermite normal form,
\[
\AdvMSIS_{m,k,\gamma}(\mathcal A)=\Pr\bigl[\,0<\lVert\mathbf y\rVert_\infty\le\gamma\ \wedge\ [\,\mathbf I\mid A\,]\mathbf y=\mathbf 0:A\us\Rq^{m\times k},\ \mathbf y\gets\mathcal A(A)\,\bigr],
\]
the probability of finding a short nonzero vector in the kernel of $[\,\mathbf I\mid A\,]$. MSIS is hard if this is negligible for every PPT $\mathcal A$.
\end{definition}

\begin{definition}[SelfTargetMSIS]\label{def:stmsis}
For a hash function $H:\{0,1\}^\ast\to\Btau$ modelled as a random oracle,
\[
\AdvSTMSIS_{H,m,k,\gamma}(\mathcal A)=\Pr\bigl[\,0<\lVert\mathbf y\rVert_\infty\le\gamma\ \wedge\ H(\inner{[\,\mathbf I\mid A\,]\mathbf y,\,M})=c:A\us\Rq^{m\times k},\ (\mathbf y=(\mathbf r,c),M)\gets\mathcal A^H(A)\,\bigr].
\]
SelfTargetMSIS is hard if this is negligible for every PPT $\mathcal A$ with polynomially many oracle queries.
\end{definition}

All three are assumed hard for the parameter sets of Section \ref{sec:instantiation}.

\subsection{Commitment schemes}\label{sec:commit}
\begin{definition}[Commitment]\label{def:commit}
A commitment scheme is a pair $(\CSetup,\Com)$: $\CSetup(1^\kappa)$ outputs a key $ck$, and $\Com_{ck}(x;\rho)$ maps a message $x$ and randomness $\rho$ to a commitment. An opening of a commitment $C$ is a pair $(x,\rho)$ with $C=\Com_{ck}(x;\rho)$.
\end{definition}

\begin{definition}[Hiding]\label{def:hiding}
$\Advhide_{\Com}(\mathcal A)=\bigl\lvert\Pr[\mathcal A(ck,\Com_{ck}(x_0;\rho))=1]-\Pr[\mathcal A(ck,\Com_{ck}(x_1;\rho))=1]\bigr\rvert$ maximised over $\mathcal A$-chosen $x_0,x_1$, with $ck\gets\CSetup(1^\kappa)$ and fresh $\rho$. The scheme is hiding if this is negligible for every PPT $\mathcal A$.
\end{definition}

\begin{definition}[Binding]\label{def:binding}
$\Advbind_{\Com}(\mathcal A)=\Pr[\,C=\Com_{ck}(x;\rho)=\Com_{ck}(x',\rho')\ \wedge\ x\ne x'\,]$ over $ck\gets\CSetup(1^\kappa)$ and $(C,x,\rho,x',\rho')\gets\mathcal A(ck)$. The scheme is binding if this is negligible for every PPT $\mathcal A$.
\end{definition}

\begin{remark}[Ajtai instantiation]\label{rem:ajtai}
We use the Ajtai commitment \cite{ajtai1996} $\Com_{ck}(x;\rho)=\bar A\rho+\Encode(x)$ with $ck=\bar A\us\Rq^{k'\times m}$ and short $\rho\in S^m$, where $\Encode$ is an injective encoding of the index into a fixed set of \emph{short} coset representatives, e.g.\ the $\lceil\log_2 N\rceil$-bit expansion of the index as a $0/1$-coefficient polynomial, so that $\lVert\Encode(i)\rVert_\infty\le B_{\Encode}$ for a small $B_{\Encode}$ and hence $\lVert\Encode(i)-\Encode(i')\rVert_\infty\le 2B_{\Encode}$. It is hiding under decisional MLWE: writing $\bar A=[\,\bar A_1\mid\bar A_2\,]$ with $\bar A_1$ invertible (which holds with overwhelming probability), $\bar A_1^{-1}\bar A\rho=\rho_1+(\bar A_1^{-1}\bar A_2)\rho_2$ is an MLWE sample in secret $\rho_2$, so $\bar A\rho$ is pseudorandom and masks the message, giving $\Advhide_{\Com}\le\AdvMLWE$. It is binding under MSIS: two openings of one commitment give $\bar A(\rho-\rho')=\Encode(x')-\Encode(x)$, i.e.\ a short nonzero solution to $[\,\mathbf I\mid\bar A\,]$, so $\Advbind_{\Com}\le\AdvMSIS$.
\end{remark}

\subsection{Public parameters}\label{sec:pp}
Throughout, $pp\gets\Setup(1^\kappa)$ fixes a common matrix $A\in\Rq^{k\times l}$ and a commitment key $ck$; every user key is generated relative to the same $A$. This shared $A$ is essential: the one-out-of-$N$ proof is a membership proof for the relation $\{\,\mathbf t:\exists\,(\mathbf s_1,\mathbf s_2)\ \text{short},\ \mathbf t=A\mathbf s_1+\mathbf s_2\,\}$, which is a single well-formed relation across the ring only when every $\mathbf t_j$ is generated against the same $A$.
\section{Building block: an anonymous one-out-of-\texorpdfstring{$N$}{N} signature}\label{sec:buildingblock}
A one-out-of-$N$ signature lets a signer produce, for a self-chosen set $\pi=\{\mathbf t_1,\dots,\mathbf t_N\}$ of public keys containing its own, a signature that verifies under $\pi$ yet hides which member signed \cite{liu2024}. It is the engine that supplies ambiguity to an extended withdrawable signature.

\subsection{Role and design decision}\label{sec:role}
Extended withdrawability is delivered by an anonymous one-out-of-$N$ signature: the withdrawable signature is such a signature over the set of potential signers, and being withdrawable is exactly being signer-ambiguous. This is the design of Liu, Susilo and Baek \cite{liu2024}, who realise the one-out-of-$N$ signature by a concrete construction from discrete-logarithm primitives (Algorithm \ref{alg:dl}) and prove its signer ambiguity directly. We depart from their treatment in one respect: we specify the one-out-of-$N$ signature abstractly, through the two security properties we actually use (Definitions \ref{def:oneofn}--\ref{def:ambiunf}), and instantiate it with an established lattice one-out-of-many proof. There are two reasons, one methodological and one forced by the setting.
\begin{itemize}
    \item Modularity: Fixing the exact interface separates the security of the withdrawable layer from that of the ambiguity engine: the reductions of Section \ref{sec:security} then rest on a single, clearly stated assumption, signer ambiguity under full key exposure, rather than on the internals of one particular ring signature. Any scheme meeting the interface may be plugged in.
    \item Necessity: The specific discrete-logarithm construction of \cite{liu2024} has no secure literal lattice analogue (Proposition \ref{prop:notranscription} below). A faithful post-quantum port therefore cannot transcribe it; it must use a genuinely lattice-based ring-signature technique, which is most naturally invoked as a black box. Treating $\Ambi$ abstractly is thus not only cleaner but unavoidable.
\end{itemize}

\subsection{The discrete-logarithm one-out-of-\texorpdfstring{$N$}{N} signature of Liu-Susilo-Baek}\label{sec:dl}
For reference we recall the construction of \cite{liu2024}. Let $G$ be a group of prime order $p$ with generator $g$, $H:\{0,1\}^\ast\to\mathbb Z_p$, and $pk_j=g^{sk_j}$.

\begin{algorithm}[h]
\caption{Discrete-logarithm one-out-of-$N$ signature \cite{liu2024}}\label{alg:dl}
\begin{algorithmic}[1]
\Procedure{$\Ambi.\Sign$}{$m,sk_i,\gamma=\{pk_j\}_{j=1}^N$}
    \State $e\us\mathbb Z_p^\ast$;\quad for $j\ne i$: $t_j\us\mathbb Z_p^\ast$
    \State $U=g^e\prod_{j\ne i}pk_j^{t_j}$;\quad $t=H(m,\gamma,U)$
    \State $t_i=t-\sum_{j\ne i}t_j$;\quad $z=e-sk_i\,t_i$
    \State \Return $\sigma=(t_1,\dots,t_N,z)$
\EndProcedure
\Procedure{$\Ambi.\Verify$}{$m,\gamma,\sigma$}
    \State parse $(t_1,\dots,t_N,z)$;\quad $t=\sum_j t_j$;\quad $U'=g^z\prod_j pk_j^{t_j}$
    \State \Return $[\,t=H(m,\gamma,U')\,]$
\EndProcedure
\end{algorithmic}
\end{algorithm}

Correctness is immediate: $U'=g^z\prod_j pk_j^{t_j}=g^{e-sk_i t_i}\prod_j g^{sk_j t_j}=g^e\prod_{j\ne i}pk_j^{t_j}=U$, and $\sum_j t_j=t$. Signer ambiguity holds because the challenge space is a field: a witness-free simulator draws $t_1,\dots,t_N,z\us\mathbb Z_p$, sets $U'=g^z\prod_j pk_j^{t_j}$, and programs $H(m,\gamma,U'):=\sum_j t_j$; its output is distributed exactly as a real signature and is independent of the signer index. Two features make this work, and both are lost over lattices: the recombined challenge $t_i=t-\sum_{j\ne i}t_j$ may take any value in $\mathbb Z_p$, and the response $z=e-sk_i t_i$ is a uniform field element (as $e$ is uniform), so no rejection is needed.

\subsection{Why a literal lattice transcription fails}\label{sec:whyfails}
The Fiat-Shamir with aborts setting keeps challenges in the sparse set $\Btau$ and responses short (this is what makes the response leak no information and the proof sound). These requirements are incompatible with the additive challenge-splitting of Algorithm \ref{alg:dl}.

\begin{proposition}[No secure literal lattice transcription]\label{prop:notranscription}
Consider the module-lattice transcription of Algorithm \ref{alg:dl} in which the group map $g^{(\cdot)}$ is replaced by $s\mapsto A s$, exponents by ring elements, challenges are required to lie in $\Btau$, and the signer recombines $c_i=c-\sum_{j\ne i}c_j$ with $c=H(\cdots)\in\Btau$, accepting only if $c_i\in\Btau$. Then for a ring of $N$ potential signers with $N\ge 4$, an honest execution is accepted with probability at most $2^{-\Omega(n)}$ in the constant-relative-weight regime $\tau=\Theta(n)$, where $n$ is the ring degree of $\Rq$; signing is therefore infeasible. (For $\tau$ fixed independent of $n$ the unconditional obstruction is instead the anonymity failure of Remark \ref{rem:anonfail}.)
\end{proposition}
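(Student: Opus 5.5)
The plan is to bound the acceptance probability of a single honest signing attempt by controlling the coefficients of the recombined challenge $c_i=c-\sum_{j\ne i}c_j$. In an honest execution $c=H(\cdots)$ is (in the random-oracle model) uniform on $\Btau$. The $N-1\ge 3$ simulated challenges $c_j$, $j\ne i$, are drawn uniformly from $\Btau$ independently of $c$, as in Algorithm \ref{alg:dl}. Acceptance requires $c_i\in\Btau$, and in particular $\lVert c_i\rVert_\infty\le 1$ together with exactly $\tau$ nonzero coefficients. I would work coefficientwise. For each position $k\in\{0,\dots,n-1\}$, the coefficient $(c_i)_k=c_k-\sum_{j\ne i}(c_j)_k$ is a signed sum of at least four independent-looking entries, each in $\{-1,0,1\}$. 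Each entry is nonzero with probability $\tau/n=\Theta(1)$ and has a uniform sign when nonzero.

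The first step is to make the coordinates effectively independent. A uniform element of $\Btau$ has a uniform support of size $\tau$ and i.i.d.\ uniform signs. I would compare it with the product distribution in which each coordinate is independently nonzero with probability $\tau/n$. The standard bound relating fixed-weight to Bernoulli sampling loses only a factor $O(\sqrt n)$, since the probability that a $\mathrm{Bin}(n,\tau/n)$ variable equals exactly $\tau$ is $\Theta(1/\sqrt{\tau})$. Alternatively, I could avoid this by conditioning on the supports and arguing directly about the independent signs.

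The second step is a per-coordinate estimate. Call a position \emph{heavy} if at least two of the four (or more) challenges are nonzero there. At a heavy position with exactly two nonzero entries, the sign pattern gives $|(c_i)_k|=2$ with probability $1/2$. With three or more nonzero entries, the value has magnitude $\ge2$ with probability bounded below by a constant. In every case a heavy position independently fails the check $|(c_i)_k|\le1$ with probability at least some absolute $p_0>0$, because the signs are independent of the supports. Since $\tau/n=\Theta(1)$, each position is heavy with constant probability. A Chernoff (or hypergeometric concentration) bound therefore shows that, except with probability $2^{-\Omega(n)}$, at least $\epsilon n$ positions are heavy. Conditioned on the supports, the signs at these positions are independent, so all of them pass with probability at most $(1-p_0)^{\epsilon n}=2^{-\Omega(n)}$. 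Together with the $\mathrm{poly}(n)$ loss from the first step, this gives the stated bound. The weight-exactly-$\tau$ condition is only a further restriction, so it can be ignored.

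The main obstacle is the degenerate regime $\tau$ close to $n$, for example $\tau=n$, where every position is heavy but there are no free zeros. This is in fact the easiest case, since every coordinate is a sum of $\ge4$ independent signs and equals $\pm2$ or $\pm4$ with constant probability. The remaining care is to ensure that sign independence survives conditioning on acceptance of earlier aborts. I would handle this by analysing a single attempt and noting that repeated restarts only multiply the expected cost by $2^{\Omega(n)}$, so signing is infeasible. The parenthetical case of fixed $\tau$ is not part of the claim to be proved here and is deferred to Remark \ref{rem:anonfail}.
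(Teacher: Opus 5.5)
Your argument is correct, but it uses different randomness from the paper's proof. The paper conditions on the sum $\mathbf s=\sum_{j\ne i}c_j$ of the simulated challenges and uses only the fresh oracle output $c$:
\begin{itemize}
\item a coordinate with $|s_k|\ge 3$ rules out acceptance;
\item a coordinate with $|s_k|=2$ forces $c_k=\operatorname{sign}(s_k)$;
\item negative association of the support indicators of a fixed-weight $c$ combines these constraints into the bound $(p/2)^{|A|}$;
\item $|A|=\Theta(n)$ follows from Azuma--Hoeffding over the summands' sequential choices.
\end{itemize}

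You instead condition on the supports of all $N$ challenges and use only their i.i.d.\ signs. Given the supports, the coordinates of $c_i$ are independent Rademacher sums. Any coordinate covered by $r\ge 2$ supports violates $|(c_i)_k|\le 1$ with probability at least $1/4$, the minimum being attained at $r=3$. The number of such coordinates dominates the hypergeometric overlap of two supports, whose mean is $\tau^2/n$.

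Your route buys three things: exact independence across coordinates (no negative-association step), a failure constant uniform in the ring size, and in fact the bound for every $N\ge 2$ rather than only $N\ge 4$. The paper's route has the feature that its conditional bound holds for every fixed $\mathbf s$, relying only on the freshness of $c$.

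Two clean-ups are needed.
\begin{itemize}
\item Drop your first step. The fixed-weight-to-Bernoulli comparison costs $\Theta(\sqrt\tau)$ per challenge, i.e.\ $n^{\Theta(N)}$ overall. That is only polynomial for constant $N$, and it is superfluous once you condition on the supports, as your second step already does.
\item Justify the concentration of heavy positions through that hypergeometric overlap (Hoeffding for sampling without replacement) rather than a plain Chernoff bound, since the heavy indicators are dependent.
\end{itemize}
The regime $\tau$ near $n$, which you flag as the main obstacle, needs no separate treatment in your argument.
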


\begin{proof}
Acceptance requires $c_i=c-\mathbf s\in\Btau$ for $\mathbf s=\sum_{j\ne i}c_j$. In the random-oracle model the challenge $c=H(\cdots)\in\Btau$ is a fresh uniform element drawn \emph{after} the summands $c_j$ (hence after $\mathbf s$) are fixed, so we bound $\Pr_{c\us\Btau}[\,c-\mathbf s\in\Btau\mid\mathbf s\,]$. Membership forces $(c-\mathbf s)_k\in\{-1,0,1\}$ at every coordinate $k$. A uniform $\Btau$ element is nonzero at $k$ with probability $p:=\tau/n$ and, given nonzero, is $\pm1$ with probability $\tfrac12$ each; thus each of the $m:=N-1\ge3$ summands has $k$-th coordinate $0$ with probability $1-p$ and $\pm1$ with probability $p/2$ each. Consider a coordinate with $|s_k|\ge2$: if $|s_k|\ge3$ then $|(c-\mathbf s)_k|\ge|s_k|-\lVert c\rVert_\infty\ge2$ and acceptance is impossible; if $|s_k|=2$ then $(c-\mathbf s)_k\in\{-1,0,1\}$ forces $c_k=\operatorname{sign}(s_k)$, an event of probability $p/2$ over $c$. Writing $A=\{k:|s_k|=2\}$, acceptance requires $c$ to meet $|A|$ support-and-sign constraints; since the support indicators of a fixed-weight $c$ are negatively associated and its signs are independent, $\Pr[\text{accept}\mid\mathbf s]\le(p/2)^{|A|}$. For $m=3$ one has $\Pr[|s_k|=2]=\Theta(p^2)$, so $\mathbb E|A|=\Theta(p^2 n)$. In the constant-relative-weight regime $\tau=\Theta(n)$ (thus $p=\Theta(1)$, $p/2<\tfrac12$) this gives $\mathbb E|A|=\Theta(n)$. The coordinates of a fixed-weight vector are not independent, so we conclude by Azuma--Hoeffding over the $3\tau$ sequential support/sign choices of the three summands (each choice moves $|A|$ by at most $2$), giving $\Pr[\,|A|<c\,n\,]\le 2^{-\Omega(n)}$ for a suitable constant $c>0$; splitting on this event, $\Pr[\text{accept}]\le 2^{-\Omega(n)}+(p/2)^{c\,n}=2^{-\Omega(n)}$ and signing is infeasible.
\end{proof}

\begin{remark}[Anonymity fails even if completeness is patched]\label{rem:anonfail}
Suppose one nonetheless forced acceptance (e.g.\ by enlarging the challenge set). The witness-free simulator that establishes anonymity of Algorithm \ref{alg:dl} still has no lattice analogue: it relies on $t_i$ ranging over the entire challenge space and on $z$ being a uniform field element, whereas here $c_i$ must be sparse and $z$ short and passes a witness-dependent low-order rejection test. The recombined $c_i$ is then not distributed as a fresh $\Btau$ challenge, and the simulated and real signer branches are distinguishable. This is the reason we do not transcribe Algorithm \ref{alg:dl} but instead invoke a one-out-of-many proof, whose anonymity is a bona fide zero-knowledge statement.
\end{remark}

\subsection{Abstract syntax and security}\label{sec:abstract}
We now record the interface used in the sequel.

\begin{definition}[One-out-of-$N$ signature]\label{def:oneofn}
A one-out-of-$N$ signature scheme over the common $A$ is a triple of algorithms $\Ambi=(\AKeyGen,\ASign,\AVerify)$. Here $\AKeyGen(pp)$ outputs a key pair, we take $\AKeyGen=\KeyGen$ of Section \ref{sec:construction}, so $\mathbf t=A\mathbf s_1+\mathbf s_2$; $\ASign(\mu,sk_i,\pi)$, with $i$ the signer's index in $\pi$, outputs a signature $\sigma$; and $\AVerify(\mu,\pi,\sigma)\in\{0,1\}$. It is correct if, whenever $sk_i$ is a valid witness for some $\mathbf t_i\in\pi$,
\[
\AVerify(\mu,\pi,\ASign(\mu,sk_i,\pi))=1
\]
with overwhelming probability. Its security has two aspects, signer ambiguity and unforgeability.
\end{definition}

\begin{definition}[Signer ambiguity / anonymity under full key exposure]\label{def:anon}
In $\Exp^{\mathsf{anon}}_{\Ambi,\mathcal A}$, the challenger runs $pp\gets\Setup(1^\kappa)$, $(\mathbf t_j,sk_j)\gets\AKeyGen(pp)$ for $j\in[N]$, and hands $\mathcal A$ all secret keys $\{sk_j\}$ (full key exposure). $\mathcal A$ outputs $(\mu^\ast,i_0,i_1,\pi^\ast)$ with $\mathbf t_{i_0},\mathbf t_{i_1}\in\pi^\ast\subseteq\{\mathbf t_j\}$; the challenger draws $b\us\{0,1\}$ and returns $\sigma^\ast\gets\ASign(\mu^\ast,sk_{i_b},\pi^\ast)$; $\mathcal A$ outputs $b'$. We write $\Advanon_{\Ambi}(\mathcal A)=\bigl\lvert\Pr[b'=1\mid b=1]-\Pr[b'=1\mid b=0]\bigr\rvert=2\,\bigl\lvert\Pr[b'=b]-\tfrac12\bigr\rvert$ (the distinguishing form of the advantage, which the game hops of Section \ref{sec:security} consume without loss of constants) and say $\Ambi$ is signer-ambiguous if this is negligible for every PPT $\mathcal A$.
\end{definition}

Ambiguity under full key exposure, the adversary knows every witness, is precisely what a one-out-of-many zero-knowledge proof provides, since the accepted transcript is (statistically or computationally) independent of the witness used. It is exactly the strength we need: in the withdrawability and unforgeability games of Section \ref{sec:security} the adversary may hold ring members' secret keys. The two-key left-or-right game above is equivalent, up to a factor equal to the ring size, to the ``guess the signer'' formulation of \cite{liu2024}, whose signer-ambiguity advantage is bounded by $(\text{ring size})^{-1}+\negl$.

\begin{definition}[Unforgeability of $\Ambi$]\label{def:ambiunf}
In $\Exp^{\mathsf{unf}}_{\Ambi,\mathcal A}$, the challenger generates $\{(\mathbf t_j,sk_j)\}_{j\in[N]}$, gives $\{\mathbf t_j\}$ to $\mathcal A$, and answers signing queries $\ASign(\mu,sk_i,\pi)$ (with $\pi\subseteq\{\mathbf t_j\}$ and signer index $i$), recording $(\mu,\pi)$ in $Q$; $\mathcal A$ may corrupt a strict subset $CO\subsetneq[N]$, receiving those $sk_j$. $\mathcal A$ wins by outputting $(\mu^\ast,\pi^\ast,\sigma^\ast)$ with $\AVerify(\mu^\ast,\pi^\ast,\sigma^\ast)=1$, $(\mu^\ast,\pi^\ast)\notin Q$, and $\pi^\ast\cap\{\mathbf t_j:j\notin CO\}\ne\emptyset$ (the ring contains an uncorrupted member). $\Advunf_{\Ambi}(\mathcal A)=\Pr[\mathcal A\text{ wins}]$.
\end{definition}

\subsection{Instantiation}\label{sec:ambiinstantiation}
We instantiate $\Ambi$ by an established lattice one-out-of-many proof \cite{groth2015,esgin2019}: a zero-knowledge proof that one of the public keys in $\pi$ opens to a short witness, made non-interactive by Fiat-Shamir. Its signer ambiguity (Definition \ref{def:anon}) is the honest-verifier zero-knowledge of the underlying $\Sigma$-protocol, which holds under full key exposure, as it uses no witness, and its unforgeability (Definition \ref{def:ambiunf}) is the knowledge soundness of the proof, reducing to MSIS. Such proofs achieve size logarithmic in the ring, and by Proposition \ref{prop:notranscription} they, rather than a naive challenge-split, are the correct lattice realisation. Only signer ambiguity is invoked in the proofs of the main construction (Section \ref{sec:security}); Definition \ref{def:ambiunf} is recorded for completeness and matches the treatment of \cite{liu2024}.

In particular, the one-out-of-many proofs of \cite{groth2015,esgin2019} are anonymous under full key exposure via the special honest-verifier zero-knowledge of their $\Sigma$-protocol, so Definition \ref{def:anon} is met concretely rather than merely assumed.

\section{Extended withdrawable signatures: syntax and security}\label{sec:ews}
Our security model is that of Liu, Susilo and Baek \cite{liu2024}: the same seven-algorithm syntax and the same three goals, correctness, unforgeability under insider corruption, and extended withdrawability. We refine it in two respects, both stated and justified in Section \ref{sec:relation}: we replace their completeness-style definition of extended withdrawability by an indistinguishability game (Definition \ref{def:wd}), and we add an explicit claimability-soundness notion (Definition \ref{def:claim}). Section \ref{sec:relation} then compares every definition, assumption and security statement with its counterpart in \cite{liu2024}, and justifies each discrepancy. Throughout, $N\ge 2$ is the number of potential signers and $i$ the index of the true signer.

\begin{definition}[Syntax]\label{def:syntax}
$\EWS=(\Setup,\KeyGen,\Shuffle,\WSign,\WSVerify,\Confirm,\CVerify,\Chk)$ with $pp\gets\Setup(1^\kappa)$, $(pk,sk)\gets\KeyGen(pp)$, $aux\gets\Shuffle(pk,\pi)=\pi\setminus\{pk\}$, $\sigma\gets\WSign(\mu,sk,aux)$, $1/0\gets\WSVerify(\mu,pk,\sigma)$ (public), $\delta\gets\Confirm(\mu,sk,pk,aux,\sigma)$, $1/0\gets\CVerify(\mu,pk,\sigma,\delta)$, $1/0\gets\Chk(\mu,pk,\sigma,\delta)$.
\end{definition}

\begin{definition}[Correctness]\label{def:correctness}
For all $\kappa$, all $\pi\ni pk$ and all $\mu$, running $\Setup,\KeyGen,\Shuffle,\WSign,\Confirm$ honestly yields $\WSVerify(\mu,pk,\sigma)=\CVerify(\mu,pk,\sigma,\delta)=\Chk(\mu,pk,\sigma,\delta)=1$ with overwhelming probability.
\end{definition}

\begin{definition}[Extended withdrawability: anonymity-until-claim]\label{def:wd}
Consider $\Exp^{\mathsf{wd}}_{\EWS,\mathcal A}(1^\kappa)$:
\begin{enumerate}
    \item $pp\gets\Setup(1^\kappa)$; $(pk_j,sk_j)\gets\KeyGen(pp)$ for $j\in[N]$ with $N\ge 2$; $\pi=\{pk_j\}_{j\in[N]}$.
    \item $\mathcal A$ is given $pp$, $\pi$ and all secret keys $\{sk_j\}_{j\in[N]}$, and outputs $(\mu^\ast,i_0,i_1,st)$ with $i_0\ne i_1$.
    \item the challenger draws $b\us\{0,1\}$ and $\rho^\ast\us S^m$, computes $(\sigma^\ast_r,C^\ast,\cdot)\gets\WSign(\mu^\ast,sk_{i_b},\pi\setminus\{pk_{i_b}\})$, where $C^\ast=\Com_{ck}(i_b;\rho^\ast)$, and returns the challenge in the $b$-independent canonical packaging $\sigma^\ast=(\sigma^\ast_r,C^\ast,aux^\ast)$, $aux^\ast:=\pi\setminus\{pk_{i_0}\}$. (Re-packaging is lossless and public by re-attribution, Theorem \ref{thm:correct}; without this convention the component $aux=\pi\setminus\{pk_{i_b}\}$ would name the signer by omission and the game would be trivially winnable.)
    \item $\mathcal A(st,\sigma^\ast)$ outputs $b'$; the experiment returns $[b'=b]$.
\end{enumerate}

Since $\mathcal A$ holds every $sk_j$, it can compute $\WSign$ and $\Confirm$ itself for any signature it constructs; the challenge $\sigma^\ast$ is the only object it cannot confirm, as the challenge randomness $\rho^\ast$ and the bit $b$ are hidden. $\EWS$ is extended-withdrawable if $\Advwd_{\EWS}(\mathcal A):=\bigl\lvert\Pr[b'=b]-\tfrac12\bigr\rvert\le\negl(\kappa)$.
\end{definition}

\begin{definition}[Unforgeability under insider corruption]\label{def:eufcmaEWS}
Consider $\Exp^{\mathsf{EUF\text{-}CMA}}_{\EWS,\mathcal A}(1^\kappa)$: $pp\gets\Setup(1^\kappa)$; $(pk_j,sk_j)\gets\KeyGen(pp)$ for $j\in[N]$; a target index $i$ is fixed; $\mathcal A$ receives $pp$, $\{pk_j\}_{j\in[N]}$ and the non-target keys $\{sk_j\}_{j\ne i}$ (insider corruption). Sets $W,M\gets\emptyset$. $\mathcal A$ has access to two oracles, each requiring $pk_i\in\pi$ and $\lvert\pi\rvert\ge 2$:
\begin{itemize}
    \item $\Oracle{WSign}(\mu,\pi)$: run $\sigma\gets\WSign(\mu,sk_i,\pi\setminus\{pk_i\})$ with its internal randomness $\rho$; store $(\mu,\pi,\sigma,\rho)$ in $W$; return $\sigma$.
    \item $\Oracle{Confirm}(\mu,\pi,\sigma)$: if $(\mu,\pi,\sigma,\rho)\in W$, set $M\gets M\cup\{\mu\}$ and return $\Confirm$ recomputed from the stored $\rho$; else return $\bot$.
\end{itemize}

$\mathcal A$ outputs $(\mu^\ast,\pi^\ast,\sigma^\ast,\delta^\ast)$ and wins if $\mu^\ast\notin M$, $pk_i\in\pi^\ast$, and $\WSVerify(\mu^\ast,pk_i,\sigma^\ast)=\CVerify(\mu^\ast,pk_i,\sigma^\ast,\delta^\ast)=\Chk(\mu^\ast,pk_i,\sigma^\ast,\delta^\ast)=1$. $\AdvEUF{\EWS}(\mathcal A):=\Pr[\mathcal A\text{ wins}]$.
\end{definition}

\begin{remark}[Scope: existential vs.\ strong unforgeability]\label{rem:suf}
Definition \ref{def:eufcmaEWS} targets existential (fresh-message) unforgeability. Strong unforgeability follows verbatim from the same reduction (Theorem \ref{thm:unforge}) if the base signature $\DS$ is SUF-CMA, as the no-hint, full-$t$ Dilithium variant of Section \ref{sec:instantiation} is, via MSIS, because $\delta_s$ signs the injective encoding $\inner{\mu,\pi,\sigma}$, so any change to $(\pi,\sigma)$ or to $\delta_s$ yields a fresh base message/signature pair.
\end{remark}

\begin{definition}[Claimability soundness]\label{def:claim}
$\Advclaim_{\EWS}(\mathcal A)$ is the probability that $\mathcal A$ outputs $(\mu,\pi,\sigma,\delta,\delta')$ with $\Chk(\mu,\pi[i],\sigma,\delta)=\Chk(\mu,\pi[i'],\sigma,\delta')=1$ and $i\ne i'$ (the indices revealed in $\delta,\delta'$), i.e.\ two accepted confirmations that attribute $\sigma$ to distinct ring members (the positions being well defined by the canonical order of Section \ref{sec:notation}).
\end{definition}

\subsection{Relation to the Liu-Susilo-Baek model}\label{sec:relation}
We now compare each notion above with its counterpart in \cite{liu2024} and justify every discrepancy. The definitional comparison is summarised, alongside the construction-level differences, in the consolidated comparison of Section \ref{sec:comparison} (Table \ref{tab:comparison}); here we take each notion in turn.

\paragraph{Syntax (Definition \ref{def:syntax}).}
Both models use the same seven algorithms (Definition \ref{def:syntax}) with the same roles: $\Shuffle$ publishes $aux=\pi\setminus\{pk\}$, $\WSVerify$ is public, and $\Chk$ ties a confirmed signature to its withdrawable object. The only differences are in the argument lists of $\Confirm$ and $\CVerify$: \cite{liu2024} write $\Confirm(m,sk,\sigma)$ and $\CVerify(m,pk,\tilde\sigma)$ in their syntax (while silently passing $aux$ and $\sigma$ in the construction), whereas we make the dependence explicit, $\Confirm(\mu,sk,pk,aux,\sigma)$ and $\CVerify(\mu,pk,\sigma,\delta)$. \emph{Justification:} in our construction the confirmed signature is an ordinary signature on the whole withdrawable object $\inner{\mu,\pi,\sigma}$, so both $\Confirm$ and $\CVerify$ must read $\sigma$ (and $\pi$, reconstructed from $pk$ and $aux$); making this explicit keeps traceability intrinsic to the confirmed signature rather than deferring it entirely to $\Chk$.

\paragraph{Correctness (Definition \ref{def:correctness}).}
This is identical to \cite[Definition 1]{liu2024}: an honestly produced withdrawable signature passes $\WSVerify$, its confirmation passes $\CVerify$, and the pair passes $\Chk$, all with overwhelming probability. No discrepancy.

\paragraph{Unforgeability under insider corruption (Definition \ref{def:eufcmaEWS}).}
This matches \cite[Definition 2]{liu2024} in substance: the adversary has a corruption oracle, a withdrawable signing oracle and a confirmation oracle, may hold every ring member's key except the target's, and wins by confirming a fresh message under the target key with all three verifications passing. Two refinements. (i) \emph{Fixed target.} \cite{liu2024} generate a pool of keys and let the adversary forge for whichever uncorrupted index it chooses; we fix the target index $i$ in advance. This is without loss of generality up to a factor $N$ (a reduction guesses $i$; see Theorem \ref{thm:unforge}) and makes the reduction cleaner. (ii) \emph{Bookkeeping.} We record signed tuples in $W$ and let $\Oracle{Confirm}$ act only on $\sigma\in W$; \cite{liu2024} do the same (``if $\sigma\in W$''). A point of rigour rather than definition: \cite[Thm. 1]{liu2024} claims a reduction loss $L=q_H$ (the number of hash queries), whereas our reduction (Theorem \ref{thm:unforge}) incurs no multiplicative $q_H$ factor, it neither rewinds the adversary nor guesses a hash index, the only losses being the factor $N$ above and a $q_W\Advanon_{\Ambi}$ term from simulating the signing oracle. We view the tighter reduction as a benefit of the modular treatment.

\paragraph{Extended withdrawability (Definition \ref{def:wd}): the main discrepancy.}
Here we depart from \cite{liu2024} deliberately. Their Definition 3 declares the experiment successful when the adversary outputs $(pk',aux')$ with $aux'=\gamma\setminus\{pk'\}$ and $\WSVerify(m^\ast,pk',\sigma^\ast)=1$, and requires $\Pr[\Exp^{\mathsf{Withdraw}}=1]=1$. This is a completeness statement, it asserts that a withdrawable signature can always be re-attributed to another ring member and still verify, and it contains no hidden bit, hence no hiding guarantee. Read literally, it is met by any scheme whose $\WSVerify$ accepts, including one that plainly reveals the signer. \cite{liu2024} obtain the actual hiding elsewhere, from the signer ambiguity of the underlying one-out-of-$N$ signature (their Definition 4) together with the ``$1/n$-guess'' step in the proof of their Theorem 3. We instead fold both into a single anonymity-until-claim game (Definition \ref{def:wd}): a hidden bit $b$ selects which of two signer identities produces the challenge, the advantage $\lvert\Pr[b'=b]-\tfrac12\rvert$ must be negligible, and the ``until claim'' clause is enforced by withholding the challenge randomness $\rho^\ast$, so that the commitment $C^\ast$ cannot be opened and $\sigma^\ast$ cannot be confirmed. \emph{Justification:} (a) a completeness statement is not a security definition, so the hiding must be phrased as indistinguishability; (b) our game captures exactly the guarantee \cite{liu2024} intend, ambiguity among the potential signers until confirmation, now as one self-contained notion that the construction provably meets (Theorem \ref{thm:wd}), rather than a property split across a completeness claim and a building-block assumption; and (c) nothing is lost, since the re-attribution completeness that Definition 3 does capture is, in our treatment, an immediate consequence of correctness of ring verification ($\WSVerify$ accepts for every $pk'\in\pi$). This is the most consequential definitional change and the reason our withdrawability theorem is a reduction to commitment hiding and ring anonymity rather than a probability-one assertion.

\paragraph{Signer ambiguity of the building block (Definition \ref{def:anon}).}
The hiding that \cite{liu2024} place in their Definition 4 (signer ambiguity of the one-out-of-$N$ signature, advantage $\le\tfrac1N+\negl$) survives verbatim as our Definition \ref{def:anon}, stated as a two-key left-or-right game under full key exposure and equivalent to theirs up to the factor $N$. In our model it lives with the building block (Section \ref{sec:buildingblock}) and is invoked by the withdrawability theorem, rather than standing beside the withdrawability definition; this is the structural counterpart of folding Definition 3 and Definition 4 into one notion.

\paragraph{Claimability soundness (Definition \ref{def:claim}): a new notion.} \cite{liu2024} have no explicit counterpart. Their $\Chk$ links $\tilde\sigma$ to $\sigma$ (through the embedded group element $g^r$), and their unforgeability prevents a non-signer from confirming, but they never isolate the property that a confirmed signature binds to a unique signer index, that no adversary can present two valid confirmations of one $\sigma$ attributing it to different members. In our construction the signer's index is an explicit committed value, so this property is both meaningful and necessary for ``claim'' to be sound; we state it as Definition \ref{def:claim} and prove it (Proposition \ref{prop:claim}) from the binding of the commitment. \emph{Justification:} this is exactly the claimability of a claimable ring signature \cite{park2019}, the abstraction \cite{liu2024} themselves point to; making it explicit closes a gap in the original model.

\paragraph{Assumptions.}
The model of \cite{liu2024} is realised under the discrete-logarithm assumption (via the EUF-CMA security of the discrete-log signature and the ambiguity of the discrete-log one-out-of-$N$ signature). We replace these by their post-quantum, module-lattice analogues (Section \ref{sec:hardness}): decisional MLWE, MSIS and SelfTargetMSIS, together with the base signature's EUF-CMA and the ring signature's anonymity. Our claimable-ring design additionally uses a commitment, hence the extra hiding (MLWE) and binding (MSIS) assumptions of Section \ref{sec:commit}; \cite{liu2024} need no commitment because their confirmation links through the group element $g^r$, a re-randomisation that is free in a prime-order group but has no cheap lattice analogue (Section \ref{sec:comparison}).

\paragraph{Security statements and their proofs.}
Beyond the definitions, our results correspond to the three theorems of \cite{liu2024}, and our proofs close gaps in theirs.
\begin{itemize}
    \item Unforgeability: \cite[Thm. 1]{liu2024} reduces to the EUF-CMA security of the discrete-log signature but asserts a reduction loss $L=q_H$ with no rewinding or hash-guessing argument, and in extracting the forgery it identifies the confirmed component with the underlying signature. Our Theorem \ref{thm:unforge} gives an explicit reduction with no $q_H$ loss and a clean extraction: the confirmed signature $\delta_s$ on the fresh string $\inner{\mu^\ast,\pi^\ast,\sigma^\ast}$ is the base forgery.
    \item Signer ambiguity: \cite[Thm. 2]{liu2024} simulates the challenge by a uniform tuple $(t_1,\dots,t_N,z)$ and concludes advantage $1/2$ ``since $b,sk_b$ are unused''; the step that makes the simulated tuple verify, programming $H$ at the induced commitment, is left implicit. Over a field this is harmless (a uniform tuple can always be made to verify), but the same omission is fatal over lattices, where a uniform tuple neither lies in the sparse set $\Btau$ nor passes the rejection test (Proposition \ref{prop:notranscription}). We therefore take ambiguity to be the zero-knowledge of a one-out-of-$N$ proof (Definition \ref{def:anon}), made rigorous with the programming step explicit.
    \item Extended withdrawability: \cite[Thm. 3]{liu2024} argues that, ambiguity holding, the adversary guesses the signer with probability $1/N$ and that $\WSVerify$ still accepts, a mixture of an informal $1/N$ bound and the completeness of re-attribution. Our Theorem \ref{thm:wd} is a game-based reduction to commitment hiding and ring anonymity, and we additionally prove claimability soundness (Proposition \ref{prop:claim}), for which \cite{liu2024} state no result.
    \item A new statement: Proposition \ref{prop:notranscription} (no secure literal lattice transcription) has no counterpart in \cite{liu2024}: it is a lattice-specific impossibility that justifies treating the one-out-of-$N$ signature as a black box rather than transcribing their concrete scheme.
\end{itemize}

\paragraph{Relation to the earlier lattice transcription.}
A direct lattice transcription of \cite{liu2024} adopts their definitions verbatim, including the completeness-style Definition 3 and the informal ambiguity argument, and so inherits the gaps above, compounded by two lattice-specific defects: it publishes a signer-derived shift $\{A\mathbf s_1 s_j\}$ that breaks ambiguity (Section \ref{sec:comparison}), and its one-out-of-$N$ layer is the unsound challenge-split of Proposition \ref{prop:notranscription}. The refinements of this section, the hiding game (Definition \ref{def:wd}), the explicit claimability notion (Definition \ref{def:claim}), and the black-box ambiguity assumption (Definition \ref{def:anon}), correct issues present in both \cite{liu2024} and that transcription, and are exactly what the construction of Section \ref{sec:construction} is designed to satisfy.

\section{The construction}\label{sec:construction}

\subsection{Ingredients and random oracles}\label{sec:ingredients}
The construction combines the three primitives fixed in Sections \ref{sec:sig}--\ref{sec:commit}.

\begin{enumerate}
    \item A Fiat-Shamir with aborts base signature $\DS=(\KeyGen,\Sign,\Verify)$ over the common matrix $A$ (Section \ref{sec:fswa}), whose signing algorithm evaluates a challenge random oracle $H_{\DS}:\{0,1\}^\ast\to\Btau$.
    \item An anonymous one-out-of-$N$ signature $\Ambi=(\AKeyGen,\ASign,\AVerify)$ (Section \ref{sec:buildingblock}), whose non-interactive proof evaluates its own challenge oracle $H_{\Ambi}:\{0,1\}^\ast\to\mathcal X$, where $\mathcal X$ is the challenge space of the underlying $\Sigma$-protocol (a bounded subset of $\Rq$ for a one-out-of-many proof; any finite set for the abstract interface). We take $\AKeyGen=\KeyGen$, so ring members and signers share the key format $pk=\mathbf t=A\mathbf s_1+\mathbf s_2$, $sk=(\mathbf s_1,\mathbf s_2)$.
    \item A commitment $\Com_{ck}:\mathcal M\times S^m\to\Rq^{k'}$ with key $ck=\bar A\in\Rq^{k'\times m}$ (Section \ref{sec:commit}), together with an injective index encoding $\Encode:[N]\to\mathcal M$; we write $\Com_{ck}(i;\rho)$ for $\Com_{ck}(\Encode(i);\rho)$, where $i$ is the signer's position in $\pi$ under the canonical order of Section \ref{sec:notation}, so that ``the committed index'' names a unique ring member.
\end{enumerate}

Here $\inner{\cdot}:(\{0,1\}^\ast)^\ast\to\{0,1\}^\ast$ is the fixed injective tuple encoding of Section \ref{sec:notation}. The two random oracles $H_{\DS}$ and $H_{\Ambi}$ are domain-separated (say, by a distinct one-bit prefix), so a query to one is never a query to the other; the extended withdrawable signature introduces no further oracle of its own, invoking $H_{\DS},H_{\Ambi}$ only through $\Sign/\Verify$ and $\ASign/\AVerify$. Correctness and security are stated in the (quantum) random-oracle model (Section \ref{sec:qrom}).

\subsection{Setup and key generation}\label{sec:setupkeygen}

\begin{algorithm}[h]
\caption{$\Setup(1^\kappa)$ and $\KeyGen(pp)$}\label{alg:setup}
\begin{algorithmic}[1]
\Procedure{$\Setup$}{$1^\kappa$}
    \State $A\us\Rq^{k\times l}$;\quad $ck=\bar A\us\Rq^{k'\times m}$;\quad \Return $pp=(A,ck)$
\EndProcedure
\Procedure{$\KeyGen$}{$pp$}
    \State $(\mathbf s_1,\mathbf s_2)\us S^l_\eta\times S^k_\eta$;\quad $\mathbf t=A\mathbf s_1+\mathbf s_2$;\quad
    \Return $pk=\mathbf t$, $sk=(\mathbf s_1,\mathbf s_2)$
\EndProcedure
\end{algorithmic}
\end{algorithm}

$\Setup$ fixes the parameters shared by all users: a uniform module matrix $A\us\Rq^{k\times l}$ and a commitment key $\bar A\us\Rq^{k'\times m}$. A single common $A$ is essential (Section \ref{sec:pp}): each public key is the module-LWE sample $\mathbf t=A\mathbf s_1+\mathbf s_2\in\Rq^k$ taken against the same $A$, so that the membership relation ``$\exists$ short $(\mathbf s_1,\mathbf s_2):\mathbf t_j=A\mathbf s_1+\mathbf s_2$'' proved by the one-out-of-$N$ layer is identical for every ring member, which is what makes ``one of the $\mathbf t_j$ opens to a short witness'' a well-formed statement across $\pi$. The secret key is the short pair $(\mathbf s_1,\mathbf s_2)\in S^l_\eta\times S^k_\eta$; its shortness is what the base signature's rejection sampling exploits and what decisional MLWE (Definition \ref{def:mlwe}) protects against key recovery. Finally, $\Setup$ is a public-coin common reference string: $A$ and $\bar A$ are uniformly sampled and no party holds, or needs, a trapdoor, since binding (MSIS) and hiding (MLWE) hold for uniformly generated $\bar A$.

The contrast with the group-based $\KeyGen$ of \cite{liu2024} is discussed in Section \ref{sec:compalg}.

\subsection{Withdrawable signing and public verification}\label{sec:wsign}

\begin{algorithm}[h]
\caption{$\WSign(\mu,sk_i,aux)$ and $\WSVerify(\mu,pk,\sigma)$}\label{alg:wsign}
\begin{algorithmic}[1]
\Procedure{$\WSign$}{$\mu,sk_i,aux$}
    \State $\pi=aux\cup\{pk_i\}$;\quad $\rho\us S^m$;\quad $C=\Com_{ck}(i;\rho)$ \Comment{hiding commitment to signer index}
    \State $\sigma_r\gets\ASign(\inner{\mu,C},sk_i,\pi)$ \Comment{$C$ bound into the ring message}
    \State \Return $\sigma=(\sigma_r,C,aux)$
\EndProcedure
\Procedure{$\WSVerify$}{$\mu,pk,\sigma$} \Comment{public}
    \State parse $(\sigma_r,C,aux)$;\quad $\pi=aux\cup\{pk\}$;\quad \Return $\AVerify(\inner{\mu,C},\pi,\sigma_r)$
\EndProcedure
\end{algorithmic}
\end{algorithm}

$\WSign$ proceeds in two steps. First it \emph{commits to its own index}: it samples short randomness $\rho\us S^m$ and sets $C=\Com_{ck}(i;\rho)\in\Rq^{k'}$, a commitment that hides $i$ under decisional MLWE (Definition \ref{def:hiding}) and to which the signer will later be bound (Definition \ref{def:binding}). Second, it produces a one-out-of-$N$ signature on the message $\inner{\mu,C}$ over the ring $\pi=aux\cup\{pk_i\}$ using its own witness, $\sigma_r\gets\ASign(\inner{\mu,C},sk_i,\pi)$. Folding $C$ into the signed message binds the commitment to the transcript through $H_{\Ambi}$, so $C$ cannot be stripped or replaced without invalidating $\sigma_r$; this is what makes the later claim sound (Remark \ref{rem:plainsig}). The withdrawable signature is $\sigma=(\sigma_r,C,aux)$. Verification is public: $\WSVerify$ reconstructs $\pi=aux\cup\{pk\}$ and returns $\AVerify(\inner{\mu,C},\pi,\sigma_r)$, reading only public keys.

Two properties are immediate from the algorithm. \emph{Ambiguity:} $\sigma_r$ hides the signer index by the anonymity of $\Ambi$ (Definition \ref{def:anon}) and $C$ hides it by commitment hiding, so $\sigma$ leaks nothing about which member of $\pi$ signed. \emph{Extended withdrawability (re-attribution):} for any $pk'\in\pi$ the repackaged signature $\sigma'=(\sigma_r,C,\pi\setminus\{pk'\})$ verifies, $\WSVerify(\mu,pk',\sigma')=1$, since $(\pi\setminus\{pk'\})\cup\{pk'\}=\pi$; one withdrawable signature verifies under every potential signer. This is exactly the completeness that Definition 3 of \cite{liu2024} isolates, and which our hiding game (Definition \ref{def:wd}) subsumes. The corresponding change to $\WSign$, a committed index in place of published shifts, is compared in Section \ref{sec:compalg}.

\subsection{Confirmation, confirmed verification and checking}\label{sec:confirm}

\begin{algorithm}[h]
\caption{$\Confirm$, $\CVerify$, $\Chk$}\label{alg:confirm}
\begin{algorithmic}[1]
\Procedure{$\Confirm$}{$\mu,sk_i,pk_i,aux,\sigma$}
    \State $\pi=aux\cup\{pk_i\}$;\quad $\delta_s\gets\Sign_{sk_i}(\inner{\mu,\pi,\sigma})$;\quad \Return $\delta=(\delta_s,i,\rho)$ \Comment{$(i,\rho)$: position and opening retained by the signer from $\WSign$}
\EndProcedure
\Procedure{$\CVerify$}{$\mu,pk_i,\sigma,\delta$}
    \State parse $(\delta_s,i,\rho)$;\quad \Return $\Verify_{pk_i}(\inner{\mu,\pi,\sigma},\delta_s)$
\EndProcedure
\Procedure{$\Chk$}{$\mu,pk_i,\sigma,\delta$}
    \State parse $\sigma=(\sigma_r,C,aux)$, $\delta=(\delta_s,i,\rho)$
    \If{$pk_i=(aux\cup\{pk_i\})[i]\ \wedge\ C=\Com_{ck}(i;\rho)\ \wedge\ \WSVerify(\mu,pk_i,\sigma)=1\ \wedge\ \CVerify(\mu,pk_i,\sigma,\delta)=1$}
    \State \Return $1$
    \EndIf
    \State \Return $0$
\EndProcedure
\end{algorithmic}
\end{algorithm}

$\Confirm$ is the signer’s claim, and it is stateful in one respect: per signature, the signer retains the opening randomness $\rho$ drawn inside $\WSign$ together with its position $i$ in $\pi$; the experiments of Section \ref{sec:ews} model exactly this by storing $\rho$ in $W$. Holding $sk_i$ and $(i,\rho)$, the signer produces an ordinary base signature on the entire withdrawable object, $\delta_s\gets\Sign_{sk_i}(\inner{\mu,\pi,\sigma})$, and reveals the opening $(i,\rho)$ of the index commitment $C$; the confirmed signature is $\delta=(\delta_s,i,\rho)$. Signing $\inner{\mu,\pi,\sigma}$ rather than $\mu$ alone makes $\delta_s$ inseparable from this particular $\sigma$, and revealing $(i,\rho)$ lifts the ambiguity by naming the signer. $\CVerify$ is public: it returns $\Verify_{pk_i}(\inner{\mu,\pi,\sigma},\delta_s)$, so a confirmed signature is checkable under the signer's public key by anyone. $\Chk$ certifies traceability on top of $\CVerify$: it recomputes $C=\Com_{ck}(i;\rho)$ (the revealed index is the one committed inside $\sigma$) and requires $\WSVerify(\mu,pk_i,\sigma)=1$ as well; hence $\Chk$ accepts only when $\delta$ confirms this $\sigma$ and attributes it to the unique committed signer.

\begin{remark}[Why a plain signature is not a sound claim]\label{rem:plainsig}
$\delta_s=\Sign_{sk_i}(\inner{\mu,\pi,\sigma})$ proves the claimant holds $sk_i$, but every ring member holds its own secret key, so on its own any member could ``confirm'' any $\sigma$. The commitment $C$, created at signing time and bound into the ring message $\inner{\mu,C}$, is fixed by $\sigma$; by binding it opens to a single index, the true signer's. $\Chk$ accepts only if the opened $i$ matches $C$, so only the actual author can confirm.
\end{remark}

\begin{remark}[Reuse of the key across the two layers]\label{rem:keyreuse}
The same secret $sk=(\mathbf s_1,\mathbf s_2)$ serves as the $\Ambi$ membership witness in $\sigma_r$ and as the $\DS$ signing key in $\delta_s$. This is safe because both primitives are zero-knowledge: the accepted $\Ambi$ transcript is witness-independent (Definition \ref{def:anon}) and $\DS$ is naHVZK (Section \ref{sec:fswa}), so a distinguisher's joint view of $\sigma_r$ and any confirmed $\delta_s$ under $pk_i$ is efficiently simulatable and leaks nothing about $\mathbf s_1$ beyond $pk_i=\mathbf t_i$. Concretely, this is why Step 1 of Theorem \ref{thm:unforge} may answer confirmations with the real key while the reduction of Step 2 never needs $sk_i$: the two uses of the key are never simultaneously required under the target.
\end{remark}

Our $\Confirm$ replaces their two signatures and embedded group element by one base signature and one commitment opening; see Section \ref{sec:compalg}.

\section{Security analysis}\label{sec:security}

We establish, in the (quantum) random-oracle model, the four properties of Section \ref{sec:ews}: correctness (Theorem \ref{thm:correct}), extended withdrawability (Theorem \ref{thm:wd}), unforgeability under insider corruption (Theorem \ref{thm:unforge}) and claimability soundness (Proposition \ref{prop:claim}). Each result reduces the security of the composed scheme to a single, previously stated guarantee of an ingredient: correctness to the correctness of $\DS,\Ambi,\Com$; extended withdrawability to commitment hiding (hence decisional MLWE) and the anonymity of $\Ambi$; unforgeability to the EUF-CMA security of the base signature $\DS$; and claimability soundness to commitment binding (hence MSIS). All reductions are tight up to the stated additive terms and the factor $N$ from guessing the target signer, and none rewinds the adversary. Throughout we write $\sigma=(\sigma_r,C,aux)$, $\delta=(\delta_s,i,\rho)$ and $\pi=aux\cup\{pk_i\}$ as in Section \ref{sec:construction}.

Correctness requires that an honestly generated withdrawable signature and its confirmation pass all three verification algorithms. Because the construction is modular, this follows from the correctness of the three ingredients; the lattice rounding that makes it hold at the concrete level is confined to $\DS$ and $\Ambi$ and rests on the rounding-stability Lemma \ref{lem:rounding}.

\begin{theorem}[Correctness]\label{thm:correct}
If $\DS$, $\Ambi$ and $\Com$ are correct, then $\EWS$ is correct (Definition \ref{def:correctness}).
\end{theorem}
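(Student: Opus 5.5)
We fix $\kappa$, a ring $\pi\ni pk_i$ and a message $\mu$, run $\Setup$, $\KeyGen$, $\Shuffle$, $\WSign$ and $\Confirm$ honestly, and check the three verification algorithms of Definition \ref{def:correctness} one at a time. For each check we isolate a single correctness event of one ingredient, and we finish with a union bound over these events.

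\emph{$\WSVerify$.} Let $aux=\Shuffle(pk_i,\pi)=\pi\setminus\{pk_i\}$. Then $\WSign$ reconstructs $aux\cup\{pk_i\}=\pi$, and so does $\WSVerify(\mu,pk_i,\sigma)$; both therefore use the same ring and the same ring message $\inner{\mu,C}$. Since $\AKeyGen=\KeyGen$, the key $sk_i$ is a valid witness for $\mathbf t_i\in\pi$. Correctness of $\Ambi$ (Definition \ref{def:oneofn}) then gives $\AVerify(\inner{\mu,C},\pi,\sigma_r)=1$ except with negligible probability $\epsilon_{\Ambi}$. The same computation, with $pk'$ in place of $pk_i$ and $(\pi\setminus\{pk'\})\cup\{pk'\}=\pi$, yields the re-attribution property invoked in Definition \ref{def:wd}. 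I would record it here as part of the theorem.

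\emph{$\CVerify$.} $\Confirm$ signs the string $\inner{\mu,\pi,\sigma}$, where $\pi=aux\cup\{pk_i\}$. $\CVerify$ must recompute exactly this string from its inputs $(\mu,pk_i,\sigma)$, using the $aux$ component of $\sigma$. This is the one point needing care: the algorithm leaves $\pi$ implicit, so I will state that $\pi$ is rebuilt as $aux\cup\{pk_i\}$ from the parsed $\sigma$. I will also use that $\inner{\cdot}$ is a fixed deterministic encoding and that sets are encoded in the canonical order of Section \ref{sec:notation}, so the signed and verified strings coincide bit for bit. Correctness of $\DS$ (Definition \ref{def:sig}) then gives $\Verify_{pk_i}(\inner{\mu,\pi,\sigma},\delta_s)=1$ except with probability $\epsilon_{\DS}$. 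Any lattice rounding concerns, via Lemma \ref{lem:rounding}, are contained inside $\DS$ and $\Ambi$ and are not reopened.

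\emph{$\Chk$.} The check has four conjuncts. The last two are the events above. For the position conjunct, $i$ is by construction the position of $pk_i$ in $\pi$ under the canonical order, so $(aux\cup\{pk_i\})[i]=pk_i$ holds deterministically. For the commitment conjunct, $\Confirm$ returns the same $(i,\rho)$ that $\WSign$ used, and $\Com_{ck}$ is a deterministic function of $(\Encode(i),\rho)$, so $C=\Com_{ck}(i;\rho)$ holds with probability one. This is the only place the correctness of $\Com$ enters, and it is trivial for the Ajtai commitment of Remark \ref{rem:ajtai}. A union bound then shows all three algorithms accept with probability at least $1-\epsilon_{\Ambi}-\epsilon_{\DS}$, which is overwhelming.

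The main obstacle is bookkeeping rather than mathematics. The statefulness of $\Confirm$, which must retain $(i,\rho)$ from $\WSign$, and the implicit reconstruction of $\pi$ in $\CVerify$ both have to be made explicit. Once the canonical-order convention fixes both the index $i$ and the encoding of $\pi$, each step is a direct appeal to one ingredient's correctness.
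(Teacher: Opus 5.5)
Your proposal is correct and follows essentially the same route as the paper: correctness of $\Ambi$ for $\WSVerify$, correctness of $\DS$ for $\CVerify$, and deterministic recomputation of $C=\Com_{ck}(i;\rho)$ together with the position check for $\Chk$. As in the paper, you combine these with a union bound over the negligible correctness errors and record re-attribution as a corollary of the $\WSVerify$ step. Your explicit reconstruction of $\pi$ as $aux\cup\{pk_i\}$ inside $\CVerify$, and the canonical-order encoding of $\inner{\mu,\pi,\sigma}$, are careful bookkeeping that the paper leaves implicit, but they do not change the argument.
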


\begin{proof}
Fix $\kappa$, a ring $\pi$ with $pk_i\in\pi$, a message $\mu$, and honestly generated $pp,\{(pk_j,sk_j)\}$; let $\sigma=(\sigma_r,C,aux)\gets\WSign(\mu,sk_i,aux)$ and $\delta=(\delta_s,i,\rho)\gets\Confirm(\mu,sk_i,pk_i,aux,\sigma)$. We check the three requirements of Definition \ref{def:correctness}.

\begin{enumerate}
    \item $\WSVerify(\mu,pk_i,\sigma)=1$. By construction $\sigma_r=\ASign(\inner{\mu,C},sk_i,\pi)$ with $sk_i$ a valid witness for $pk_i=\mathbf t_i\in\pi$, and $\WSVerify$ returns $\AVerify(\inner{\mu,C},\pi,\sigma_r)$, which is $1$ with overwhelming probability by correctness of $\Ambi$ (Definition \ref{def:oneofn}). At the concrete level (Section \ref{sec:instantiation}) this is the completeness of the one-out-of-many $\Sigma$-protocol made non-interactive by Fiat-Shamir: an honestly generated membership proof verifies with overwhelming probability. (The $\HighBits$/$\LowBits$ rounding of Lemma \ref{lem:rounding} enters only through the base signature $\DS$ in $\CVerify$/$\Chk$, not through $\Ambi$.)
    \item $\CVerify(\mu,pk_i,\sigma,\delta)=1$. Here $\delta_s=\Sign_{sk_i}(\inner{\mu,\pi,\sigma})$ and $\CVerify$ returns $\Verify_{pk_i}(\inner{\mu,\pi,\sigma},\delta_s)$, which is $1$ with overwhelming probability by correctness of $\DS$ (Definition \ref{def:sig}).
    \item $\Chk(\mu,pk_i,\sigma,\delta)=1$. $\Chk$ recomputes $\Com_{ck}(i;\rho)$ and compares it with the $C$ contained in $\sigma$; the two are equal because $\Confirm$ reveals the very randomness $\rho$ used to form $C$ at signing time. The position check $pk_i=(aux\cup\{pk_i\})[i]$ holds by construction, since the honest $\Confirm$ reveals exactly the signer's position in $\pi$. Together with items 1 and 2, which $\Chk$ re-invokes, all four conjuncts hold and $\Chk$ returns $1$.
\end{enumerate}

Each item fails only with the negligible correctness error of the corresponding ingredient, so their conjunction holds with overwhelming probability. Finally, re-attribution is a corollary of item 1: for any $pk'\in\pi$, the repackaged signature $\sigma'=(\sigma_r,C,\pi\setminus\{pk'\})$ satisfies $\WSVerify(\mu,pk',\sigma')=1$, since $(\pi\setminus\{pk'\})\cup\{pk'\}=\pi$ reconstructs the same ring and message $\inner{\mu,C}$, the completeness that Definition 3 of \cite{liu2024} isolates and that our hiding game subsumes.
\end{proof}

Extended withdrawability is the hiding guarantee: until the signer confirms, no efficient adversary, even one holding every secret key in the ring, can tell which of two nominated members produced a withdrawable signature. Exactly two components of $\sigma$ could betray the signer: the index commitment $C$ and the ring signature $\sigma_r$. The proof neutralises them in turn by a three-game sequence, first replacing $C$ by a commitment to a fixed dummy index (commitment hiding), then re-signing $\sigma_r$ with a fixed witness (ring anonymity), after which the challenge is manifestly independent of the hidden bit $b$. Crucially, since $\mathcal A$ holds all keys it answers its own $\WSign$ and $\Confirm$ queries, so the reduction alters only the challenge $\sigma^\ast$; and $\mathcal A$ cannot confirm $\sigma^\ast$, because the challenge randomness $\rho^\ast$ (which would open $C^\ast$) is withheld, this is the ``until claim'' clause.

\begin{theorem}[Extended withdrawability]\label{thm:wd}
For every PPT $\mathcal A$ there are PPT $\mathcal D,\mathcal B$ with
\[
\Advwd_{\EWS}(\mathcal A)\le\Advhide_{\Com}(\mathcal D)+\Advanon_{\Ambi}(\mathcal B)\le \AdvMLWE(\mathcal D)+\Advanon_{\Ambi}(\mathcal B).
\]
\end{theorem}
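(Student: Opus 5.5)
The plan is a three-game hybrid argument in which only the challenge $\sigma^\ast$ changes; the oracle-free structure of Definition \ref{def:wd} makes each hop a direct black-box reduction. Let $\mathsf G_0$ be $\Exp^{\mathsf{wd}}_{\EWS,\mathcal A}$ itself, with challenge $\sigma^\ast=(\sigma^\ast_r,C^\ast,aux^\ast)$, where
\[
C^\ast=\Com_{ck}(i_b;\rho^\ast),\qquad \sigma^\ast_r\gets\ASign(\inner{\mu^\ast,C^\ast},sk_{i_b},\pi),\qquad aux^\ast=\pi\setminus\{pk_{i_0}\}.
\]
The packaging $aux^\ast$ is already $b$-independent by the canonical-packaging convention, and correctness/re-attribution (Theorem \ref{thm:correct}) guarantees that the ring reconstructed from $aux^\ast$ and $pk_{i_0}$ is $\pi$. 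Hence the only $b$-dependent parts are $C^\ast$ and $\sigma^\ast_r$. Write $p_k=\Pr[b'=b\text{ in }\mathsf G_k]$.

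\emph{Hop 1 (commitment).} In $\mathsf G_1$, replace $C^\ast$ by $\Com_{ck}(i_0;\rho^\ast)$ and still sign with $sk_{i_b}$ on $\inner{\mu^\ast,C^\ast}$. To bound $|p_0-p_1|$, build $\mathcal D$ against hiding (Definition \ref{def:hiding}). $\mathcal D$ receives $ck$, samples $A$ and all key pairs itself so that $pp=(A,ck)$ is correctly distributed, and runs $\mathcal A$ to obtain $(\mu^\ast,i_0,i_1)$. It then draws $b$ itself and submits the message pair $(x_0,x_1)=(i_b,i_0)$; in the case $b=0$ the pair coincides and the hop is trivially zero. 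On receiving the challenge commitment $C$, $\mathcal D$ computes $\sigma_r$ with $sk_{i_b}$, which it holds, hands $\mathcal A$ the challenge $(\sigma_r,C,aux^\ast)$, and outputs $[b'=b]$. This gives $|p_0-p_1|\le\Advhide_{\Com}(\mathcal D)$. Remark \ref{rem:ajtai} then yields $\Advhide_{\Com}\le\AdvMLWE$. The point to verify here is that $\rho^\ast$ is never revealed to $\mathcal A$, which is exactly the ``until claim'' clause, so $\mathcal D$ needs no opening.

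\emph{Hop 2 (ring anonymity).} In $\mathsf G_2$, sign $\sigma^\ast_r$ with $sk_{i_0}$ instead of $sk_{i_b}$. Build $\mathcal B$ against Definition \ref{def:anon}. $\mathcal B$ receives $pp$ and all $N$ key pairs under full key exposure, which exactly matches what $\mathcal A$ is given, and it can supply $ck$ as part of $\Setup$ by sampling $\bar A$ itself. $\mathcal B$ runs $\mathcal A$, draws $b$, samples $\rho^\ast$, and sets $C^\ast=\Com_{ck}(i_0;\rho^\ast)$. It then submits $(\inner{\mu^\ast,C^\ast},i_b,i_0,\pi)$ to its own challenger; if $b=0$ the two games coincide. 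It forwards the returned signature inside the challenge and outputs $[b'=b]$. The resulting bound is $|p_1-p_2|\le\Advanon_{\Ambi}(\mathcal B)$, using the distinguishing form of the advantage so that no constant is lost.

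\emph{Conclusion.} In $\mathsf G_2$ the whole challenge $(\ASign(\inner{\mu^\ast,C^\ast},sk_{i_0},\pi),\Com_{ck}(i_0;\rho^\ast),\pi\setminus\{pk_{i_0}\})$ is independent of $b$, so $p_2=\tfrac12$. The triangle inequality gives $\Advwd_{\EWS}(\mathcal A)=|p_0-\tfrac12|\le\Advhide_{\Com}(\mathcal D)+\Advanon_{\Ambi}(\mathcal B)$, and the MLWE form follows.

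The main obstacle is the bookkeeping around $b$ inside the reductions. Because the reductions choose $b$ themselves, the hybrid hops only compare ``signer $i_b$'' with ``signer $i_0$''. I would handle this by conditioning on $b=1$: the hops are vacuous when $b=0$, and the factor $\Pr[b=1]=\tfrac12$ only helps. The other point needing care is the random oracles. $H_{\Ambi}$ and $H_{\DS}$ are part of the respective challengers' worlds, and domain separation ensures that $\mathcal D$ can simulate them lazily while $\mathcal B$ can forward $H_{\Ambi}$ queries. Since no confirmation oracle exists, $H_{\DS}$ is simulated honestly, and no programming or rewinding is needed. The QROM lift is therefore immediate.
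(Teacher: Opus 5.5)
Your proposal is correct and follows the paper's proof essentially step for step. It uses the same hybrids $G_0\to G_1\to G_2$, the same hiding query $(i_b,i_0)$ and anonymity query $(\inner{\mu^\ast,C^\ast},i_b,i_0,\pi)$ with $b$ drawn by the reductions themselves, and the same conclusion $p_2=\tfrac12$. One small caveat: your aside that the QROM lift is ``immediate'' overstates it, since the oracles your reductions lazily sample must be replaced by a $2q$-wise independent function against superposition queries (Remark \ref{rem:qrom}); this does not affect the theorem as stated in the ROM.
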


\begin{proof}
Write $p_x:=\Pr[b'=b\mid G_x]$ for the winning probability in game $G_x$. The adversary already holds every $sk_j$ and constructs any auxiliary signatures itself, so the challenger’s only action is to produce the challenge $\sigma^\ast=(\sigma^\ast_r,C^\ast, aux^\ast)$; the three games differ only in how $C^\ast$ and $\sigma^\ast_r$ are formed. In every game the packaging is the canonical $aux^\ast=\pi\setminus\{pk_{i_0}\}$ of Definition \ref{def:wd}, so the $aux$ component carries no information about $b$. Recall $\sigma^\ast_r=\ASign(\inner{\mu^\ast,C^\ast},sk_{u},\pi)$ for a witness index $u$ and $C^\ast=\Com_{ck}(v;\rho^\ast)$ for a committed index $v$; the three games vary the pair $(u,v)$.

\begin{itemize}
    \item Game $G_0$ (real): $C^\ast=\Com_{ck}(i_b;\rho^\ast)$, $\sigma^\ast_r=\ASign(\inner{\mu^\ast,C^\ast},sk_{i_b},\pi)$. By definition: 
    \[
    p_0=\tfrac12\pm\Advwd_{\EWS}(\mathcal A), \text{so } \Advwd_{\EWS}(\mathcal A)=\lvert p_0-\tfrac12\rvert.
    \]

    \item Game $G_1$ (dummy commitment): Identical to $G_0$ except $C^\ast=\Com_{ck}(i_0;\rho^\ast)$ (a commitment to the fixed index $i_0$, independent of $b$); the ring signature is still 
    \[
    \sigma^\ast_r=\ASign(\inner{\mu^\ast,C^\ast},sk_{i_b},\pi).
    \]
    
    \emph{Bound $\lvert p_0-p_1\rvert$.} Let $\mathcal D$ be a hiding adversary in the sense of Definition \ref{def:hiding}: it names two messages $(x_0,x_1)$ and receives $C^\ast=\Com_{ck}(x_\beta;\rho^\ast)$ with fresh hidden $\rho^\ast$, for one of the two cases $\beta\in\{0,1\}$; its advantage is the gap between its acceptance probabilities in the two cases. $\mathcal D$ generates $pp$ (drawing $ck$ from the hiding challenger and $A$ itself) and all $(pk_j,sk_j)$, hands $\mathcal A$ everything, and receives $(\mu^\ast,i_0,i_1,st)$. It picks $b\us\{0,1\}$, submits the single query $(m_0,m_1)=(i_b,i_0)$, receives $C^\ast$, computes $\sigma^\ast_r=\ASign(\inner{\mu^\ast,C^\ast},sk_{i_b},\pi)$ (it knows $sk_{i_b}$), returns $\sigma^\ast=(\sigma^\ast_r,C^\ast,aux^\ast)$, and outputs $[b'=b]$. If $\beta=0$ then $C^\ast=\Com_{ck}(i_b;\cdot)$ and $\mathcal D$ perfectly simulates $G_0$; if $\beta=1$ then $C^\ast=\Com_{ck}(i_0;\cdot)$ and it simulates $G_1$. Hence $\lvert p_0-p_1\rvert\le\Advhide_{\Com}(\mathcal D)\le\AdvMLWE(\mathcal D)$.

    \item Game $G_2$ (fixed witness): Identical to $G_1$ except the ring signature uses the fixed witness $sk_{i_0}$: $\sigma^\ast_r=\ASign(\inner{\mu^\ast,C^\ast},sk_{i_0},\pi)$. Now none of $C^\ast$ (as of $G_1$), $\sigma^\ast_r$, or the canonical $aux^\ast$ depends on $b$, so the whole view of $\mathcal A$ is independent of $b$; therefore $p_2=\tfrac12$.
    
    \emph{Bound $\lvert p_1-p_2\rvert$.} Let $\mathcal B$ play the $\Ambi$ anonymity game (Definition \ref{def:anon}), whose challenger provides all $\{sk_j\}$ and, on a query $(m,j_0,j_1,\pi)$, returns $\ASign(m,sk_{j_{\beta'}},\pi)$ for its bit $\beta'$. $\mathcal B$ receives the keys, hands them and $pp,\pi$ to $\mathcal A$ (relaying $H_{\Ambi}$ to its challenger's oracle and lazily sampling $H_{\DS}$ itself), gets $(\mu^\ast,i_0, i_1,st)$, picks $b\us\{0,1\}$ and $\rho^\ast$, forms $C^\ast=\Com_{ck}(i_0;\rho^\ast)$, and submits the anonymity query $(\inner{\mu^\ast,C^\ast},i_b,i_0,\pi)$; it plants the returned $\sigma^\ast_r$ into $\sigma^\ast$ and outputs $[b'=b]$. If $\beta'=0$ the witness is $sk_{i_b}$ (game $G_1$); if $\beta'=1$ the witness is $sk_{i_0}$ (game $G_2$). Full key exposure makes it legitimate for $\mathcal B$ to hold and forward all secret keys. Hence $\lvert p_1-p_2\rvert\le\Advanon_{\Ambi}(\mathcal B)$.
\end{itemize}

Combining the three steps,
\[
\Advwd_{\EWS}(\mathcal A)=\Bigl\lvert p_0-\tfrac12\Bigr\rvert=\lvert p_0-p_2\rvert\le\lvert p_0-p_1\rvert+\lvert p_1-p_2\rvert\le\Advhide_{\Com}(\mathcal D)+\Advanon_{\Ambi}(\mathcal B).\qedhere
\]
\end{proof}

Unforgeability under insider corruption guarantees that only the genuine signer can confirm: no adversary, even one holding every ring member's key except the target's, can output a confirmed signature on a message the target never confirmed. Note the withdrawable signature $\sigma_r$ is itself a ring signature that the adversary could forge with a corrupted witness, so the unforgeable core is the confirmation, where $\delta_s$ is a base signature under the target key $pk_i$ on the whole object $\inner{\mu,\pi,\sigma}$. Accordingly the reduction plants the EUF-CMA challenge at $pk_i$: it answers withdrawable-signing queries without $sk_i$ by ring-signing with a corrupted witness (indistinguishable by anonymity, Step 1) and confirmation queries through its own signing oracle, so that a confirmed forgery on a fresh message is exactly a base-scheme forgery (Step 2). The reduction never rewinds and incurs no multiplicative $q_H$ factor, the point of rigour on which it improves
\cite[Thm. 1]{liu2024}.

\begin{theorem}[Unforgeability under insider corruption]\label{thm:unforge}
For every PPT $\mathcal A$ making $q_W$ signing and $q_C$ confirmation queries there are PPT $\mathcal B,\mathcal B'$ with
\[
\AdvEUF{\EWS}(\mathcal A)\le\AdvEUF{\DS}(\mathcal B)+q_W\,\Advanon_{\Ambi}(\mathcal B').
\]
$\mathcal B$ makes at most $q_C$ signing queries and does not rewind $\mathcal A$ (no multiplicative $q_H$ loss).
\end{theorem}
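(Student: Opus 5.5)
The plan is a two-game hybrid that plants the EUF-CMA challenge key of $\DS$ at the target slot $i$, following the outline given just before the statement. Game $G_0$ is the real experiment of Definition \ref{def:eufcmaEWS}. Game $G_1$ changes only the oracle $\Oracle{WSign}$. On a query $(\mu,\pi)$ with $pk_i\in\pi$ and $\lvert\pi\rvert\ge2$, the challenger picks some $j\in\pi$ with $j\ne i$, which exists because $\lvert\pi\rvert\ge 2$, and whose key is available since all non-target keys are corrupted. It still commits to the \emph{true} index, $C=\Com_{ck}(i;\rho)$, so that later confirmations open correctly. It then computes $\sigma_r\gets\ASign(\inner{\mu,C},sk_j,\pi)$ instead of using $sk_i$. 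The tuple $(\mu,\pi,\sigma,\rho)$ is stored in $W$ exactly as before.

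To bound $\lvert\Pr[G_0]-\Pr[G_1]\rvert\le q_W\Advanon_{\Ambi}(\mathcal B')$, I would use a standard hybrid over the $q_W$ signing queries. In hybrid $H_k$ the first $k$ queries use $sk_j$ and the rest use $sk_i$. $\mathcal B'$ guesses $k\us[q_W]$ and receives all secret keys from the anonymity challenger; full key exposure makes this legitimate, and it lets $\mathcal B'$ answer every other query, including confirmations, with the real $sk_i$ (Remark \ref{rem:keyreuse}). At query $k$ it submits $(\inner{\mu,C},i,j,\pi)$ to the challenger, relaying $H_{\Ambi}$ and lazily sampling $H_{\DS}$. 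Finally it runs $\mathcal A$'s win predicate itself, which it can do because it knows every key. Each step of the hybrid costs one anonymity advantage, which gives the additive term.

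In $G_1$ the key $sk_i$ is used only inside $\Confirm$, that is, to produce $\delta_s=\Sign_{sk_i}(\inner{\mu,\pi,\sigma})$. The reduction $\mathcal B$ gets $pk^\ast$ from the EUF-CMA challenger and sets $pk_i:=pk^\ast$; this is distributed correctly because $\AKeyGen=\KeyGen$ and the common $A$ comes from $pp$. It generates all other keys itself. It answers $\Oracle{WSign}$ as in $G_1$ and answers each $\Oracle{Confirm}$ on a stored tuple by querying its own signing oracle on $\inner{\mu,\pi,\sigma}$ and appending the stored $(i,\rho)$. This uses at most $q_C$ signing queries and never rewinds $\mathcal A$. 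When $\mathcal A$ outputs a winning $(\mu^\ast,\pi^\ast,\sigma^\ast,\delta^\ast)$, the condition $\CVerify=1$ means $\Verify_{pk_i}(\inner{\mu^\ast,\pi^\ast,\sigma^\ast},\delta^\ast_s)=1$. Since $\mu^\ast\notin M$, injectivity of $\inner{\cdot}$ guarantees that $\inner{\mu^\ast,\pi^\ast,\sigma^\ast}$ was never submitted to the signing oracle, because every oracle query has first component some $\mu\in M$. So $(\inner{\mu^\ast,\pi^\ast,\sigma^\ast},\delta^\ast_s)$ is a valid base forgery and $\Pr[G_1]\le\AdvEUF{\DS}(\mathcal B)$.

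The step I expect to need the most care is the anonymity hybrid. There are two points to check. First, the anonymity game of Definition \ref{def:anon} gives a single challenge; the hybrid handles this by having $\mathcal B'$ answer all other signing queries with real keys. Second, the win condition must remain efficiently checkable inside $\mathcal B'$, which it is because $\mathcal B'$ knows every key. I would also remark that the target index is fixed by the definition, so there is no factor $N$ here; such a factor would only arise when passing to an adaptive-target variant.
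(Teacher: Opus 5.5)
Your proposal is correct and matches the paper's proof step for step. You run a $q_W$-query hybrid with a uniformly guessed index that replaces the ring-signing witness $sk_i$ by a fixed non-target witness, which costs $q_W\,\Advanon_{\Ambi}$ under full key exposure; you then plant the $\DS$ challenge key at slot $i$, answer confirmations through the signing oracle, and use $\mu^\ast\notin M$ with injectivity of $\inner{\cdot}$ to get a fresh base forgery. The paper is only more explicit that the unchanged commitment is to the position $\iota(\pi)$ of $pk_i$ in the queried ring, which can differ from the global label $i$ for proper subrings, and that $\mathcal B$ relays $H_{\DS}$ to its challenger while lazily sampling $H_{\Ambi}$.
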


\begin{proof}
For a ring $\pi\ni pk_i$ with $\lvert\pi\rvert\ge 2$ write $j_0(\pi)$ for a fixed member of $\pi\setminus\{pk_i\}$, a witness $sk_{j_0(\pi)}$ for it is known to the honest keyholder (all non-target keys are honestly generated), and $\iota(\pi)$ for the position of $pk_i$ in $\pi$ under the canonical order (Section \ref{sec:notation}), i.e.\ the index the honest signer commits to and reveals. We proceed in two steps: a game hop that removes the use of $sk_i$ from the signing oracle, and a reduction that turns a forgery into a $\DS$ forgery.

\begin{itemize}
    \item Step 1: replacing the signing witness: Let $H_0=\Exp^{\mathsf{EUF\text{-}CMA}}_{\EWS,\mathcal A}$ be the real game (the $\Oracle{WSign}$ oracle signs the ring with witness $sk_i$). Let $H_1$ be identical except that $\Oracle{WSign}(\mu,\pi)$ answers with $\sigma=(\ASign(\inner{\mu,C},sk_{j_0(\pi)},\pi),C,\pi\setminus\{pk_i\})$, $C=\Com_{ck}(\iota(\pi);\rho)$, the same commitment $C$ to the position $\iota(\pi)$ and the same message $\inner{\mu,C}$, but the ring is signed with the alternative witness $sk_{j_0(\pi)}$. All other steps (in particular $C$, $W$, $\Oracle{Confirm}$ and the win predicate) are unchanged. We claim
    \[
    \bigl\lvert\Pr[\mathcal A\text{ wins }H_0]-\Pr[\mathcal A\text{ wins }H_1]\bigr\rvert\le q_W\,\Advanon_{\Ambi}(\mathcal B').
    \]
    Define hybrids $H^{(0)},\dots,H^{(q_W)}$ where $H^{(t)}$ answers the first $t$ $\Oracle{WSign}$ queries with the alternative witness $sk_{j_0(\pi)}$ and the remaining ones with $sk_i$; then $H^{(0)}=H_0$ and $H^{(q_W)}=H_1$. A distinguisher $\mathcal B'$ against $\Ambi$-anonymity (Definition \ref{def:anon}) receives all $\{sk_j\}$ from its challenger, legitimate, since anonymity is under full key exposure, and this is exactly why $\mathcal B'$ may hold $sk_i$ here even though $\mathcal B$ below will not, picks $t^\ast\us[q_W]$, and simulates $\mathcal A$ answering the $s$-th $\Oracle{WSign}$ query with $sk_{j_0}$ if $s<t^\ast$, with $sk_i$ if $s>t^\ast$, and, for $s=t^\ast$ on ring $\pi$ and message $\inner{\mu,C}$, by submitting the anonymity query $(\inner{\mu,C},i,j_0(\pi),\pi)$ and planting the reply. It relays $H_{\Ambi}$ to its anonymity challenger's oracle, so the planted challenge verifies in $\mathcal A$'s view, and simulates $H_{\DS}$ by lazy sampling; it answers $\Oracle{Confirm}$ honestly (it holds all keys and all stored $\rho$) and finally outputs $[\mathcal A\text{ wins}]$. When the anonymity bit selects witness $sk_i$ (resp.\ $sk_{j_0}$) the $t^\ast$-th answer matches $H^{(t^\ast-1)}$ (resp.\ $H^{(t^\ast)}$), so a standard telescoping over the uniform $t^\ast$ yields the displayed bound.

    \item Step 2: reduction to $\DS$ unforgeability: $\mathcal B$ receives from the $\DS$ challenger the public matrix $A$, an EUF-CMA challenge key $pk^\ast$ generated under it, a signing oracle $\Oracle{Sign}(\cdot)$ and the random oracle $H_{\DS}$; it completes the public parameters as $pp=(A,ck)$ with a fresh $ck=\bar A\us\Rq^{k'\times m}$ (the ring layer requires the ring keys and $pk^\ast$ to share one $A$, Section \ref{sec:pp}). It sets $pk_i:=pk^\ast$, runs $(pk_j,sk_j)\gets\KeyGen(pp)$ for all $j\ne i$, and gives $\mathcal A$ the keys $\{pk_j\}$ and $\{sk_j\}_{j\ne i}$. It answers, without ever using $sk_i$:
    
    \begin{itemize}
    \item $\Oracle{WSign}(\mu,\pi)$: draw $\rho\us S^m$, set $C=\Com_{ck}(\iota(\pi);\rho)$, $\sigma_r\gets\ASign(\inner{\mu,C},sk_{j_0(\pi)},\pi)$ using the known witness $sk_{j_0(\pi)}$, $\sigma=(\sigma_r,C,\pi\setminus\{pk_i\})$; store $(\mu,\pi,\sigma,\rho)$ in $W$; return $\sigma$.
    \item $\Oracle{Confirm}(\mu,\pi,\sigma)$: if $(\mu,\pi,\sigma,\rho)\in W$, query $\delta_s\gets\Oracle{Sign}(\inner{\mu,\pi,\sigma})$, set $M\gets M\cup\{\mu\}$, and return $\delta=(\delta_s,\iota(\pi),\rho)$; else $\bot$.
    \item Random oracles: $H_{\DS}$ is relayed to the $\DS$ challenger’s oracle; $H_{\Ambi}$ is simulated by $\mathcal B$ by lazy sampling ($\mathcal B$ ring-signs with the real witness $sk_{j_0(\pi)}$, so no programming is needed). Neither oracle is ever programmed by $\mathcal B$.
    \end{itemize}
\end{itemize}

This is a perfect emulation of $H_1$: the $\Oracle{WSign}$ answers are exactly those of $H_1$, and $\Oracle{Confirm}$ produces the honest $\delta$ because the stored $\rho$ opens $C$ to $\iota(\pi)$ and $\delta_s=\Sign_{sk_i}(\inner{\mu,\pi,\sigma})$ is what $\Oracle{Sign}$ returns (recall $sk_i=sk^\ast$).

Suppose $\mathcal A$ wins $H_1$, outputting $(\mu^\ast,\pi^\ast,\sigma^\ast,\delta^\ast)$ with $\delta^\ast=(\delta^\ast_s,\iota^\ast,\rho^\ast)$ for some index $\iota^\ast$, $\mu^\ast\notin M$, and $\CVerify=1$, i.e.\ $\Verify_{pk^\ast}(\inner{\mu^\ast,\pi^\ast,\sigma^\ast},\delta^\ast_s)=1$. The only messages $\mathcal B$ ever sent to $\Oracle{Sign}$ are the strings $\inner{\mu_t,\pi_t,\sigma_t}$ arising from confirmation queries, and every such $\mu_t$ was placed in $M$. Since $\mu^\ast\notin M$, we have $\mu^\ast\ne\mu_t$ for all $t$; by injectivity of $\inner{\cdot}$ the string $\inner{\mu^\ast,\pi^\ast,\sigma^\ast}$ differs from every queried string, hence was never signed. Therefore $(\inner{\mu^\ast,\pi^\ast,\sigma^\ast},\delta^\ast_s)$ is a valid $\DS$ forgery, which $\mathcal B$ outputs. Thus $\Pr[\mathcal A\text{ wins }H_1]\le\AdvEUF{\DS}(\mathcal B)$, with $\mathcal B$ making at most $q_C$ signing queries and never rewinding $\mathcal A$. Combining Steps 1--2 gives the theorem. (If the target index is chosen by $\mathcal A$ among the $N$ keys rather than fixed, $\mathcal B$ guesses it in advance, losing a factor $N$.)
\end{proof}

Claimability soundness makes the confirmation unambiguous: a confirmed signature must bind to a single signer, so that an adversary cannot present one withdrawable signature $\sigma$ with two valid confirmations attributing it to different ring members. This is the property that gives the committed index its meaning, without it, ``confirming'' would not pin down who signed, and it reduces directly to the binding of the commitment, hence to MSIS.

\begin{proposition}[Claimability soundness]\label{prop:claim}
For every PPT $\mathcal A$ there is a PPT $\mathcal A'$, running $\mathcal A$ once plus a parse, with $\Advclaim_{\EWS}(\mathcal A)\le\Advbind_{\Com}(\mathcal A')\le\AdvMSIS(\mathcal A')$.
\end{proposition}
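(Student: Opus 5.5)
The plan is a direct, non-rewinding reduction: $\mathcal A'$ receives the commitment key $ck$ from the binding challenger, embeds it in the public parameters, runs $\mathcal A$ once, and parses $\mathcal A$'s output into a binding break.

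\textbf{Setting up $pp$.} $\mathcal A'$ receives $ck=\bar A$ from the binding challenger of Definition \ref{def:binding}. It samples $A\us\Rq^{k\times l}$ itself and sets $pp=(A,ck)$. This is distributed exactly as $\Setup$, because both components are uniform public-coin strings (Section \ref{sec:setupkeygen}). $\mathcal A'$ can also lazily simulate $H_{\Ambi}$ and $H_{\DS}$, since nothing in the claim game requires programming them.

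\textbf{Parsing the output.} $\mathcal A'$ runs $\mathcal A(pp)$ and obtains $(\mu,\pi,\sigma,\delta,\delta')$. It then parses
\[
\sigma=(\sigma_r,C,aux),\qquad \delta=(\delta_s,i,\rho),\qquad \delta'=(\delta'_s,i',\rho').
\]
Suppose $\mathcal A$ wins. Acceptance of both checks gives two facts.
\begin{itemize}
\item The commitment conjunct of $\Chk$ holds twice: $C=\Com_{ck}(i;\rho)$ and $C=\Com_{ck}(i';\rho')$, with $i\ne i'$.
\item The commitment $C$ compared in both calls is the same one, because both calls take the same $\sigma$ and hence read the same component $C$.
\end{itemize}
Since $\Encode$ is injective (Section \ref{sec:ingredients}), $i\ne i'$ gives $\Encode(i)\ne\Encode(i')$. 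So $\mathcal A'$ outputs
\[
(C,\Encode(i),\rho,\Encode(i'),\rho'),
\]
which is a valid binding break. This yields $\Advclaim_{\EWS}(\mathcal A)\le\Advbind_{\Com}(\mathcal A')$, and $\mathcal A'$ does nothing beyond running $\mathcal A$ once and parsing.

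\textbf{Reaching MSIS.} The second inequality is Remark \ref{rem:ajtai}. Subtracting the two openings gives
\[
\bar A(\rho-\rho')+\bigl(\Encode(i)-\Encode(i')\bigr)=\mathbf 0.
\]
Hence $\mathbf y=\bigl(\Encode(i)-\Encode(i'),\,\rho-\rho'\bigr)$ lies in the kernel of $[\,\mathbf I\mid\bar A\,]$. It is nonzero because its first block is nonzero, and its norm satisfies
\[
\lVert\mathbf y\rVert_\infty\le\max\{2B_{\Encode},2\eta_\rho\},
\]
where $\eta_\rho$ bounds the coefficients of the commitment randomness. So $\mathbf y$ is an MSIS solution for $\gamma\ge\max\{2B_{\Encode},2\eta_\rho\}$, with $\bar A$ uniform as in Definition \ref{def:msis}. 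If needed, I would first bring $\bar A$ to the Hermite-normal-form shape of Definition \ref{def:msis}, which is possible whenever its leading block is invertible (this holds with overwhelming probability).

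\textbf{Main obstacle: shortness of the openings.} The delicate point is that $\Chk$ as written only tests $C=\Com_{ck}(i;\rho)$; it does not verify that $\rho$ is short. An adversary could therefore open $C$ with a long $\rho'$, and that is trivially possible for Ajtai commitments. My approach is to read $\Com_{ck}$ as defined only on $S^m$, as the construction specifies ($\Com_{ck}:\mathcal M\times S^m\to\Rq^{k'}$), so that a valid opening includes the check $\rho\in S^m$. I would state this explicitly as part of the $\Chk$ conjunct; with it, the norm bound above goes through.
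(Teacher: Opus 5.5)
Your proposal is correct and follows essentially the paper's argument: two accepted confirmations of the same $\sigma$ give two openings of the single commitment $C$ to distinct indices, and subtracting them yields the short nonzero kernel vector $\mathbf y=(\Encode(i)-\Encode(i'),\rho-\rho')$ of $[\,\mathbf I\mid\bar A\,]$. You also make explicit two points the paper's proof takes for granted: embedding $ck$ into $pp$, and requiring $\rho\in S^m$ for a valid opening (the paper simply asserts $\rho,\rho'\in S^m$, although $\Chk$ as written does not test it); the Hermite-normal-form step you mention is unnecessary, since $[\,\mathbf I\mid\bar A\,]$ with uniform $\bar A$ is already in that shape.
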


\begin{proof}
Suppose $\mathcal A$ outputs $\sigma=(\sigma_r,C,aux)$ together with two confirmations $\delta=(\delta_s,i,\rho)$ and $\delta'=(\delta'_s,i',\rho')$, both accepted by $\Chk$ and with $i\ne i'$. Acceptance forces $C=\Com_{ck}(i;\rho)$ and $C=\Com_{ck}(i';\rho')$. Recalling $\Com_{ck}(x;\rho)=\bar A\rho+\Encode(x)$, subtracting the two openings gives
\[
\bar A\rho+\Encode(i)=\bar A\rho'+\Encode(i')\ \Longrightarrow\ \bar A(\rho-\rho')=\Encode(i')-\Encode(i).
\]
Put $\mathbf y=\bigl(\Encode(i)-\Encode(i'),\,\rho-\rho'\bigr)\in\Rq^{k'+m}$. Then
\[
[\,\mathbf I\mid\bar A\,]\cdot\mathbf y=\bigl(\Encode(i)-\Encode(i')\bigr)+\bar A(\rho-\rho')=\mathbf 0,
\] and $\mathbf y\ne\mathbf 0$ because $\Encode$ is injective and $i\ne i'$, while $\mathbf y$ is short
because $\Encode(i),\Encode(i')$ are fixed coset representatives and $\rho,\rho'\in S^m$. Hence $\mathbf y$ is a short nonzero kernel vector of $[\,\mathbf I\mid\bar A\,]$, i.e.\ an $\mathsf{MSIS}_{k',m,\gamma}$ solution (in the indexing of Definition \ref{def:msis}: the matrix is $\bar A\in\Rq^{k'\times m}$, the solution lives in $\Rq^{k'+m}$) with $\gamma=\lVert\mathbf y\rVert_\infty$. Every such $\mathcal A$ therefore yields a commitment-binding (equivalently MSIS) solver, giving, for the wrapper $\mathcal A'$ that runs $\mathcal A$ once and outputs the two openings (resp.\ the vector $\mathbf y$), $\Advclaim_{\EWS}(\mathcal A)\le\Advbind_{\Com}(\mathcal A')\le\AdvMSIS(\mathcal A')$.
\end{proof}

\begin{remark}[QROM]\label{rem:qrom}
The reductions use $H$ only through relaying (Theorem \ref{thm:unforge}) or not at all (Theorems \ref{thm:correct}--\ref{thm:wd}, Proposition \ref{prop:claim}). The only random-oracle programming is internal to $\Ambi$-anonymity and $\DS$-unforgeability; taking both in the QROM (the former via the honest-verifier zero-knowledge of the one-out-of-many proof, the latter via the Kiltz--Lyubashevsky--Schaffner analysis of Dilithium \cite{kiltz2018}, and the adaptive reprogramming of \cite{grilo2021}) makes all statements QROM statements. No $\EWS$-level reduction (Theorems \ref{thm:correct}--\ref{thm:unforge}, Proposition \ref{prop:claim}) programs $H$, each either relays it or ignores it, so all reprogramming is confined to the assumed QROM security of $\DS$ and $\Ambi$, and the $\EWS$-level reductions transfer verbatim with the standard adaptive-reprogramming adjustments. One step is not verbatim: the oracle each reduction \emph{simulates} rather than relays cannot be lazily sampled against superposition queries; in the QROM it is implemented by a $2q$-wise independent function, perfectly indistinguishable from a random oracle to any $q$-query quantum algorithm \cite{zhandry2021}.
\end{remark}

\section{Instantiation}\label{sec:instantiation}

Take $\DS$ to be a no-hint, full-$t$ Dilithium-style signature: challenge set $\Btau$ with $\tau\in\{39,49,60\}$, $\gamma_2=(q-1)/88$ (level 2) or $(q-1)/32$ (levels 3,5), not $\gamma_1/2$, and rejection $\lVert\mathbf z\rVert_\infty<\gamma_1-\beta$, $\lVert\LowBits(A\mathbf z-c\mathbf t,2\gamma_2)\rVert_\infty<\gamma_2-\beta$; its EUF-CMA security follows from SelfTargetMSIS and MSIS. Take $\Ambi$ to be a lattice one-out-of-many proof \cite{groth2015,esgin2019} over the same $A$, whose anonymity under full key exposure is its zero-knowledge; and $\Com$ the Ajtai commitment of Section \ref{sec:commit}.

\begin{corollary}\label{cor:instantiation}
With these instantiations, for every PPT $\mathcal A$,
\[
\Advwd_{\EWS}(\mathcal A)\le\AdvMLWE+\Advanon_{\Ambi},\qquad \Advclaim_{\EWS}(\mathcal A)\le\AdvMSIS,
\]
\[
\AdvEUF{\EWS}(\mathcal A)\le\AdvEUF{\DS}+q_W\,\Advanon_{\Ambi},
\]
all negligible under decisional MLWE, MSIS, SelfTargetMSIS and the zero-knowledge of $\Ambi$.
\end{corollary}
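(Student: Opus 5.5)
The plan is to obtain Corollary \ref{cor:instantiation} by composing the three generic security results, Theorems \ref{thm:wd} and \ref{thm:unforge} and Proposition \ref{prop:claim}, with the assumption-level bounds for each concrete ingredient. No new reduction is needed. The only work is to check that each instantiated primitive satisfies the interface the generic theorem consumes.

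\emph{Step 1: the two commitment bounds.}
\begin{itemize}
  \item For withdrawability, I would apply Theorem \ref{thm:wd} directly. It already gives $\Advwd_{\EWS}(\mathcal A)\le\Advhide_{\Com}(\mathcal D)+\Advanon_{\Ambi}(\mathcal B)$.
  \item I would then substitute the Ajtai bound $\Advhide_{\Com}\le\AdvMLWE$ from Remark \ref{rem:ajtai}. This rests on $\bar A_1$ being invertible with overwhelming probability. I would absorb the failure probability into the MLWE term, or note that it is negligible.
  \item For claimability, Proposition \ref{prop:claim} gives $\Advclaim_{\EWS}\le\Advbind_{\Com}\le\AdvMSIS$. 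The only check is that the extracted vector $\mathbf y$ has norm at most $\gamma=2B_{\Encode}+2\eta_\rho$, where $\eta_\rho$ bounds $S^m$. The parameters must be such that $\mathsf{MSIS}_{k',m,\gamma}$ is hard at this $\gamma$.
\end{itemize}

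\emph{Step 2: unforgeability.}
\begin{itemize}
  \item Theorem \ref{thm:unforge} gives $\AdvEUF{\EWS}\le\AdvEUF{\DS}+q_W\Advanon_{\Ambi}$, which is the displayed bound verbatim.
  \item What remains for the ``negligible'' clause is $\AdvEUF{\DS}\le\negl$. For the no-hint, full-$t$ Dilithium variant, this comes from the standard Fiat--Shamir-with-aborts analysis. naHVZK lets signing queries be simulated by programming $H_{\DS}$. A forgery on a fresh message then yields a SelfTargetMSIS solution $(\mathbf z,c)$ with $\HighBits(A\mathbf z-c\mathbf t)$ hashing to $c$. Key indistinguishability under MLWE lets $\mathbf t$ be replaced by a uniform vector. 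MSIS would be invoked only for strong unforgeability, as in Remark \ref{rem:suf}.
  \item I would cite this chain, \cite{kiltz2018} for the QROM and \cite{ducas2017} classically, rather than reprove it.
  \item I would check that the corrected parameters, namely $\gamma_2$ dividing $q-1$ with $2\gamma_2$ even and the rejection bounds $\gamma_1-\beta$ and $\gamma_2-\beta$ where $\beta=\tau\eta$, make Lemma \ref{lem:rounding} applicable. This gives correctness and lets the standard reduction go through.
\end{itemize}

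\emph{Step 3: the anonymity term.} $\Advanon_{\Ambi}$ is negligible by the (special) honest-verifier zero-knowledge of the one-out-of-many proof of \cite{groth2015,esgin2019}, made non-interactive by programming $H_{\Ambi}$, as noted in Section \ref{sec:ambiinstantiation}. Since the simulator uses no witness, the bound holds under full key exposure. The multiplicative factor $q_W$ is polynomial, so $q_W\Advanon_{\Ambi}$ stays negligible.

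\emph{Main obstacle.} The step I expect to be most delicate is making sure the instantiated $\Ambi$ truly proves the relation $\mathbf t=A\mathbf s_1+\mathbf s_2$ over the shared $A$ with keys of the $\DS$ format. The one-out-of-many proof of \cite{esgin2019} is stated for commitments or keys of its own shape, and its zero-knowledge may be only statistical after rejection sampling. I would handle this by noting that any statistical distance adds a negligible term. I would also observe that $\AKeyGen=\KeyGen$ by Definition \ref{def:oneofn}, so the hybrids in Theorems \ref{thm:wd} and \ref{thm:unforge} remain valid.

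Finally, Remark \ref{rem:qrom} lifts all three bounds to the QROM, which completes the corollary.
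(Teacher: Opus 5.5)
Your proposal is correct and follows the paper's own (implicit) route: the corollary is Theorems~\ref{thm:wd}, \ref{thm:unforge} and Proposition~\ref{prop:claim}, with the bounds of Remark~\ref{rem:ajtai} and the ingredient-level security of $\DS$ and $\Ambi$ substituted in, and your extra bookkeeping is sound (your $2B_{\Encode}+2\eta_\rho$ is a valid though looser bound than the paper's $\max(2B_{\Encode},2\lVert S\rVert_\infty)$, and attributing existential unforgeability of $\DS$ to MLWE plus SelfTargetMSIS, with MSIS only for strong unforgeability, is more accurate than the paper's one-line ``SelfTargetMSIS and MSIS''). One point you inherit from Remark~\ref{rem:ajtai} needs more care than ``note that it is negligible'': the Dilithium instantiation uses a fully splitting ring with polynomial $q$, where a uniform $\bar A_1$ is singular with probability about $n/q$, which is neither negligible nor absorbable into $\AdvMLWE$, so the clean repair is to let $\CSetup$ sample $\bar A_1$ invertible, which makes the hiding reduction exact and costs the binding reduction only a factor $(1-\Pr[\bar A_1\text{ singular}])^{-1}$.
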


HAETAE may replace Dilithium as the base $\DS$ once its bimodal-Gaussian rejection is shown to meet the same EUF-CMA interface; it plays no role in the ambiguity or commitment layers. 

Quantitatively, the binding solution extracted in Proposition \ref{prop:claim} has norm $\gamma=\lVert\mathbf y\rVert_\infty\le\max(2B_{\Encode},\,2\lVert S\rVert_\infty)$, with $B_{\Encode}$ bounding the encoded indices (Remark \ref{rem:ajtai}) and $\lVert S\rVert_\infty$ the commitment randomness; the claimability bound is meaningful only when this $\gamma$ lies below the $\mathsf{MSIS}_{k',m,\gamma}$ hardness threshold (indexing as in Definition \ref{def:msis}) for the chosen $(q,n,k',m)$. A concrete parameter set simultaneously meeting $\DS$ EUF-CMA, $\Ambi$ zero-knowledge, MLWE-hiding and MSIS-binding at this $\gamma$ is left to the full version.

\begin{remark}[Why not a naive Dilithium challenge-split]\label{rem:naive}
The reason we do not instantiate $\Ambi$ by a naive Dilithium challenge-split, signing a ring with a single response $\mathbf z$ and challenges $c_1,\dots,c_N\in\Btau$ summing to $H(\cdot)$, recombining $c_i=H(\cdot)-\sum_{j\ne i}c_j$ for the signer, is given in Section \ref{sec:whyfails}: Proposition \ref{prop:notranscription} shows the honest signer accepts only with probability $2^{-\Omega(n)}$, and Remark \ref{rem:anonfail} shows anonymity would fail even if completeness were patched. (Should one still attempt it, the correct low-order acceptance test is on $\LowBits(A\mathbf z-\sum_j c_j\mathbf t_j,2\gamma_2)$, i.e.\ on the recombined verification vector $\mathbf w=A\mathbf z-\sum_j c_j\mathbf t_j$, not on $A\mathbf z-c_i\mathbf t_i$.)
\end{remark}

\section{Comparison with Liu-Susilo-Baek}\label{sec:comparison}
This section collects, in one place, every comparison with the discrete-logarithm construction of Liu, Susilo and Baek \cite{liu2024}: the structural reason a literal port fails (Section \ref{sec:compstruct}), the definition- and theorem-level correspondence with our security model (Section \ref{sec:compmodel}, Table \ref{tab:comparison}), and the algorithm-by-algorithm differences of the construction (Section \ref{sec:compalg}). The definitional refinements themselves are stated where the definitions are introduced (Section \ref{sec:relation}); here we assemble the full picture and justify each divergence against a single structural fact, the absence, over lattices, of a free re-randomisation of a shared value.

\subsection{The structural obstruction}\label{sec:compstruct}
Liu, Susilo and Baek realise extended withdrawability from a discrete-logarithm one-out-of-$N$ signature in which re-randomising the signer's contribution across the ring is essentially free: one group element serves ambiguity and confirmation. The lattice setting has no such free re-randomisation. Transporting a signer-derived shift in the clear (as an earlier draft did, publishing $\{A\mathbf s_1 s_j\}_j$) exposes $A\mathbf s_1\approx\mathbf t_{\mathsf{signer}}$, since $\mathbf t_i-A\mathbf s_1=\mathbf s_2$ is short only for the true signer, and breaks ambiguity, the same structural obstruction met in the designated-verifier setting. We therefore separate the two roles: ambiguity is a genuine lattice one-out-of-$N$ proof, and confirmation is a claim, a binding signature plus the opening of a hiding index commitment bound into the transcript. This is exactly a claimable ring signature \cite{park2019,yamashita2023}, a primitive the related-work literature already identifies with withdrawable signatures. The cost is one commitment and its opening per signature; the gain is that no signer-identifying value is ever published, so ambiguity reduces cleanly to MLWE (hiding) plus the zero-knowledge of the one-out-of-$N$ proof, and confirmation soundness to MSIS-binding plus the base scheme's EUF-CMA.

\subsection{Correspondence with the Liu-Susilo-Baek security model}\label{sec:compmodel}
Table \ref{tab:comparison} sets each notion of our model against its counterpart in \cite{liu2024}; the per-notion justifications are given in Section \ref{sec:relation}, where each definition is introduced. The two rows that carry genuine security content, rather than a change of algebraic setting, are extended withdrawability (a hiding game in place of a completeness statement) and claimability soundness (a new notion with no counterpart in \cite{liu2024}); the unforgeability row records a strictly tighter reduction (no multiplicative $q_H$ loss).

\begin{table}[t]
\centering
\small
\renewcommand{\arraystretch}{1.25}
\begin{tabularx}{\textwidth}{@{}l >{\raggedright\arraybackslash}p{4.4cm} >{\raggedright\arraybackslash}X@{}}
\toprule
\textbf{Notion} & \textbf{Liu-Susilo-Baek \cite{liu2024}} & \textbf{This work (difference and justification)}\\
\midrule
Syntax & seven algorithms & Definition \ref{def:syntax}; identical, but $\Confirm$, $\CVerify$ also read $\sigma,\pi$, as the confirmed signature binds the whole object $\inner{\mu,\pi,\sigma}$\\ Correctness & Definition 1 & Definition \ref{def:correctness}; identical\\ Unforgeability & Definition 2; claimed loss $L=q_H$ & Definition \ref{def:eufcmaEWS}; fixed target (loss $N$), no $q_H$ factor (Thm. \ref{thm:unforge})\\ Extended withdrawability & Definition 3: completeness, $\Pr=1$ & Definition \ref{def:wd}: hiding game, $\lvert\Pr-\tfrac12\rvert$ negligible; completeness is not a hiding guarantee\\ Signer ambiguity & Definition 4, $\le\tfrac1N+\negl$ & Definition \ref{def:anon} (full key exposure); same notion, moved to the building block and folded into withdrawability\\ Claimability soundness &, (only implicit in $\Chk$) & Definition \ref{def:claim}, Prop. \ref{prop:claim}; new, makes ``claim'' sound via commitment binding\\
Assumptions & discrete logarithm & MLWE, MSIS, SelfTargetMSIS $+$ commitment; post-quantum, and the commitment replaces the free $g^r$\\ Unforgeability proof & Thm. 1: loss $q_H$, informal extraction & Thm. \ref{thm:unforge}: no $q_H$ factor, explicit reduction\\ 
Ambiguity proof & Thm. 2: programming step left implicit & building-block ZK (Definition \ref{def:anon}); the implicit step is fatal over lattices (Prop. \ref{prop:notranscription})\\ Withdrawability proof & Thm. 3: informal $1/N$ $+$ completeness & Thm. \ref{thm:wd}: reduction to hiding $+$ anonymity\\
Impossibility result &, & Prop. \ref{prop:notranscription}: no literal lattice transcription\\
\bottomrule
\end{tabularx}
\caption{Our security model against that of Liu, Susilo and Baek \cite{liu2024}. Construction-level differences are compared in Section \ref{sec:comparison}.}
\label{tab:comparison}
\end{table}

\subsection{Construction, algorithm by algorithm}\label{sec:compalg}
We now compare the seven algorithms of Section \ref{sec:construction} with their discrete-logarithm originals. The first three items contrast the signing-side algorithms directly; the last collects the same three contrasts against a \emph{literal} lattice transcription of \cite{liu2024}, making explicit which defect each of our design choices repairs.

\paragraph{Key generation vs.\ their $g^{sk}$.}
Their key generation draws $sk\us\mathbb Z_p$ and sets $pk=g^{sk}$; ours is the module-lattice analogue, with $sk=(\mathbf s_1,\mathbf s_2)$ short and $pk=A\mathbf s_1+\mathbf s_2$. They require no $\Setup$ because the group $G$ and generator $g$ are global objects; the lattice setting has no canonical counterpart, so the common $A$ must be produced by $\Setup$ and shared, precisely the amendment missing from a literal transcription (Section \ref{sec:comparison}).

\paragraph{Withdrawable signing vs.\ their published shifts.}
Their $\WSign$ re-randomises every ring key by a shift, forming $\gamma'=\{pk_j g^{s_j}\}_j$, signs $\gamma'$ with a shifted witness $\widetilde{sk}=sk_i+r$, and publishes the shifts $\sigma_2=\{g^{s_j}\}_j$ so the verifier can rebuild $\gamma'$; the shift exists solely to enable the linked confirmation. We keep the ring $\pi$ and the witness $sk_i$ unchanged and introduce a committed index instead. The reason is structural: over a prime-order group the published shifts $\{g^{s_j}\}$ are uniform and independent of the secret, hence harmless, whereas their literal lattice analogue $\{A\mathbf s_1 s_j\}$ shares the common factor $A\mathbf s_1\approx\mathbf t_i$ and thereby reveals the signer (Section \ref{sec:comparison}). Replacing the shift by a hiding commitment removes the leak while preserving the one function the shift served, carrying, into $\sigma$, a signer-chosen value that confirmation can later open.

\paragraph{Confirmation vs.\ their ephemeral-key claim.}
Their confirmation re-derives the shared randomness $r=H(\gamma,g^{sk_i\sum_j s_j})$ and outputs $\delta_1=\Sign(\{m,\sigma\},r)$ (a signature under the ephemeral key $g^r$), $\delta_2=\Sign(m,sk_i)$ (under the real key), and $\delta_3=g^r$; their $\Chk$ then verifies $g^r=g^{s_i}$ against the value embedded in $\sigma_2$ and $\delta_1$ under $g^r$. Our single signature $\delta_s$ on $\inner{\mu,\pi,\sigma}$ plays the role of both their $\delta_2$ (it binds the real key $pk_i$) and their $\delta_1$ (it binds the object $\sigma$), while the commitment $C$ plays the role of the embedded $g^r$: the signer-chosen value fixed inside $\sigma$ at signing time that $\Chk$ re-derives and matches at confirmation. The commitment is the lattice analogue of $g^r$, with one strict improvement, it hides the index on its own, whereas $g^{s_i}$ hides only in combination with the ambiguity of $\sigma_1$. This is why our $\Confirm$ needs one base signature plus one opening, in place of two signatures and an embedded group element.

\subsection{Algorithm-level relation to a direct lattice transcription}\label{sec:algrelation}
The construction above is most sharply understood as the correction of a direct lattice transcription of \cite{liu2024}, which realises the seven algorithms over $\Rq$ but ports the discrete-logarithm operations symbol for symbol. Each of that transcription's signing-side algorithms is thereby defective, and our algorithms differ in exactly the places that repair them.

\begin{itemize}
    \item Key generation: The transcription's $\KeyGen$ samples a fresh $A\us\Rq^{k\times l}$ inside each invocation, so distinct users hold distinct matrices and the ring identity $A\mathbf z-\sum_j c_j\mathbf t_j$ is not defined across $\pi$. Our $\Setup$ (Algorithm \ref{alg:setup}) fixes one common $A\in\Rq^{k\times l}$ shared by all $N$ users, making the ring statement well formed, the omission a group-based transcription cannot see, since $g\in G$ is global by fiat.
    \item Withdrawable signing: The transcription's $\WSign$ shifts each ring key by $A\mathbf s_1 s_j$, publishes the shifts $\sigma_2=\{A\mathbf s_1 s_j\}_j$ so the verifier can rebuild the shifted ring, and signs that ring with the shifted witness $\widetilde{sk}=((1+r)\mathbf s_1,\mathbf s_2)$, where $r=s_i\in\Btau$. Both steps are unsound over lattices. First, the published shifts share the common factor $A\mathbf s_1=\mathbf t_i-\mathbf s_2\approx\mathbf t_i$, which is short-distance from the signer's key $\mathbf t_i$ but from no other $\mathbf t_j$; publishing them therefore identifies the signer and destroys ambiguity. Second, $(1+r)\mathbf s_1$ is no longer short: for $r\in\Btau$ one has $\lVert(1+r)\mathbf s_1\rVert_\infty\le(1+\tau)\lVert\mathbf s_1\rVert_\infty$, so the shifted witness violates the norm bound that $\ASign$'s rejection sampling presumes (and on which its zero-knowledge rests). Our $\WSign$ (Algorithm \ref{alg:wsign}) publishes no shift, leaves the witness $sk_i$ and the ring $\pi$ unchanged, and carries the signer-chosen value in the hiding commitment $C$ instead.
    \item Confirmation: The transcription's $\Confirm$ sets $\aleph=\sigma_2\setminus\{A\mathbf s_1 s_i\}$ and re-derives $r=H\bigl(\pi,\HighBits(\sum_{j\ne i}A\mathbf s_1 s_j+\mathbf s_2,2\gamma_2)\bigr)$; but recovering $A\mathbf s_1 s_i$ presupposes $s_i=r$, whose derivation needs $\aleph$, a circular dependency, well defined only if $\sigma_2$ is index-ordered or the per-signature randomness is stored. It then uses $\delta_3=A(1+r)\mathbf s_1+\mathbf s_2$ as an on-the-fly public key against which one of the confirmed components is verified. Our $\Confirm$ (Algorithm \ref{alg:confirm}) is stateless and non-circular: the signer signs $\inner{\mu,\pi,\sigma}$ under its real key $sk_i$ and opens the commitment it stored at signing, with no shift to recover and no ephemeral key to synthesise.
\end{itemize}

In one sentence: the transcription fails because it treats the group re-randomisation $g^{s_j}$ as though it had a free lattice analogue; the present construction removes the re-randomisation altogether and recovers traceability from a committed index. The single structural cause behind all three defects, the absence of a cheap lattice shared secret, is analysed in Section \ref{sec:comparison}.

\section{Conclusion}\label{sec:conclusion}

We have given a lattice realisation of extended withdrawable signatures over the Fiat-Shamir with aborts paradigm. The central move is a change of viewpoint: rather than porting the discrete-logarithm construction of Liu, Susilo and Baek \cite{liu2024} operation by operation, which, as we prove, cannot be done securely (Proposition \ref{prop:notranscription}), we recognise the extended withdrawable signature as a \emph{claimable ring signature} in the sense of Park and Sealfon \cite{park2019}, the very abstraction that \cite{liu2024} single out as an instantiation of their primitive. This reframing separates the two responsibilities of the object and assigns each to a component with an established analysis: signer ambiguity is carried by an anonymous one-out-of-$N$ proof used as a black box, and confirmation is a signer's claim, a binding base signature on the whole withdrawable object together with the opening of a hiding index commitment bound into the transcript. No signer-derived value is ever published in the clear, which is precisely what a literal transcription fails to achieve.

\paragraph{What we proved.} Within a security model that sharpens \cite{liu2024} in two respects, extended withdrawability is recast from a completeness statement into an anonymity-until-claim indistinguishability game (Definition \ref{def:wd}), and claimability soundness is added as an explicit notion (Definition \ref{def:claim}), we gave complete, fully explicit game-based reductions for all four properties in the (quantum) random-oracle model. Correctness (Theorem \ref{thm:correct}) follows modularly from the correctness of the ingredients. Extended withdrawability (Theorem \ref{thm:wd}) reduces, through a three-game sequence, to commitment hiding (hence decisional MLWE) and the anonymity of the one-out-of-$N$ layer. Unforgeability under insider corruption (Theorem \ref{thm:unforge}) reduces to the plain EUF-CMA security of the base signature, with a clean forgery extraction and, unlike \cite[Thm. 1]{liu2024}, no rewinding and no multiplicative $q_H$ loss, the only overheads being the factor $N$ from guessing the target and an additive $q_W\Advanon_{\Ambi}$ term. Claimability soundness (Proposition \ref{prop:claim}) reduces directly to commitment binding (hence MSIS). Along the way, Proposition \ref{prop:notranscription} isolates a lattice-specific impossibility with no counterpart in the discrete-logarithm setting: the additive challenge-split that makes their one-out-of-$N$ signature work is complete only with negligible probability once challenges are confined to the sparse set $\Btau$, and anonymity fails even if completeness is forced.

\paragraph{Significance.} The construction is publicly verifiable, ambiguous until claimed, and post-quantum: its security rests entirely on module-lattice assumptions (MLWE, MSIS, SelfTargetMSIS) and the unforgeability of a standardised-style base signature, so it retains its guarantees against a quantum adversary, exactly the durability that long-lived commitments in decentralised systems require, and that the discrete-logarithm foundation of \cite{liu2024} lacks. Because ambiguity, binding and the base signature are invoked only through their stated interfaces, the result is also modular: any anonymous one-out-of-$N$ proof, any hiding-and-binding commitment, and any EUF-CMA base signature meeting the interfaces of Section \ref{sec:construction} may be substituted without reopening the proofs. Instantiated with a no-hint, full-$t$ Dilithium-style base and an established logarithmic-size lattice one-out-of-many proof (Section \ref{sec:instantiation}), the scheme is concretely realisable, and the design is faithful to the primitive, the security goals, and the stated intuition of \cite{liu2024}.

\paragraph{Future work.} Several directions remain. First, a fully concrete instantiation: pinning down a logarithmic-size one-out-of-$N$ proof \cite{groth2015,esgin2019} with explicit parameters and giving a single parameter set that simultaneously meets base-signature EUF-CMA, one-out-of-$N$ zero-knowledge, MLWE-hiding and MSIS-binding at the extracted norm $\gamma$ of Proposition \ref{prop:claim}, the quantitative gap flagged in Section \ref{sec:instantiation}. Second, a rigorous HAETAE \cite{cheon2024} base: verifying that its bimodal-Gaussian rejection meets the same EUF-CMA interface would yield shorter signatures at no cost to the surrounding analysis. Third, tightening the reductions: both the factor $N$ from target-guessing in Theorem \ref{thm:unforge} and the factor $2$ relating the guessing and distinguishing forms of the anonymity advantage (Definition \ref{def:anon}) are artefacts of the current arguments, and it is natural to ask whether either can be removed. Fourth, functionality extensions that the modular structure should accommodate cleanly: blind withdrawable signing, threshold or multi-signer confirmation, and revocation or time-bounded confirmation windows. Finally, a treatment of concurrent confirmation and an analysis in the plain model (without random oracles) for the ambiguity layer would further strengthen the guarantees.

\bibliographystyle{plain}
\bibliography{Bibliography.bib}

\end{document}